\newtheorem{theorem}{Theorem}[section]
\newtheorem{proposition}[theorem]{Proposition}
\newtheorem{lemma}[theorem]{Lemma}
\newtheorem{corollary}[theorem]{Corollary}
\newtheorem{remark}{Remark}
\newcommand{\al}{\alpha}
\newcommand{\bt}{\beta}
\newcommand{\be}{\begin{equation}}
\newcommand{\ee}{\end{equation}}
\newcommand{\bea}{\begin{eqnarray}}
\newcommand{\eea}{\end{eqnarray}}
\newcommand{\e}{{\rm e}}
\def\tr {{\rm tr}}
\numberwithin{equation}{section}
\title{Laguerre Unitary Ensembles with Jump Discontinuities, PDEs and the Coupled Painlev\'{e} V System}
\author[1]{Shulin Lyu}
\author[2]{Yang Chen}
\author[3,\thanks{Author to whom any correspondence should be addressed.}] {Shuai-Xia Xu}
\affil[1]{School of Mathematics and Statistics, Qilu University of Technology (Shandong Academy of Sciences), Jinan
250353, China}
\affil[2]{Department of Mathematics, Faculty of Science and Technology, University of Macau, Macau, China}
\affil[3]{Institut Franco-Chinois de l'Energie Nucl\'{e}aire, Sun Yat-sen University, Guangzhou 510275, China}
\date{\today}
\begin{document}
\maketitle
\begin{abstract}
We study the Hankel determinant generated by the Laguerre weight with jump discontinuities at $t_k, k=1,\cdots,m$. By employing the ladder operator approach to establish Riccati equations,  we show that $\sigma_n(t_1,\cdots,t_m)$, the logarithmic  derivative of the $n$-dimensional Hankel determinant, satisfies a generalization of the $\sigma$-from of Painlev\'{e} V equation. Through investigating the Riemann-Hilbert problem for the associated orthogonal polynomials and via Lax pair, we express $\sigma_n$ in terms of solutions of a coupled Painlev\'{e} V system. We also build relations between the auxiliary quantities introduced in the above two methods, which provides connections between the Riccati equations and Lax pair. In addition, when each $t_k$ tends to the hard edge of the spectrum and $n$ goes to $\infty$, the scaled $\sigma_n$ is shown to satisfy a generalized Painlev\'{e} III equation.
\end{abstract}

$\mathbf{Keywords}$: Laguerre unitary ensembles; Hankel determinant; Orthogonal polynomials;

Painlev\'{e} equations; Riemann-Hilbert problems

$\mathbf{Mathematics\:\: Subject\:\: Classification\:\: 2020}$: 33E17; 34M55; 41A60; 42C05

\tableofcontents

\section{Introduction and Statement of Main Results}
The $n$-dimensional Laguerre unitary ensemble (LUE for short) is a group of $n\times n$ Hermitian random matrices with eigenvalues having the following joint probability density function
\begin{align*}
p(x_1,x_2,\cdots,x_n)=\frac{1}{C_n}\cdot\frac{1}{n!}\prod_{1\leq i<j\leq n}(x_i-x_j)^2\prod_{k=1}^n x_k^{\alpha}\e^{-x_k},
\end{align*}
where $x_k\in[0,\infty)$ for $k=1,2,\cdots,n$ and $\al>-1$.
See \cite[sections 2.5, 2.6 and 3.3]{Mehta} for more details. The normalization constant $n!C_n$, which is also known as the partition function, has the following explicit expression \cite[equation (17.6.5)]{Mehta}
\begin{align*}
n!C_n:=&\int_{[0,\infty)^n} \prod_{1\leq i<j\leq n}(x_i-x_j)^2\prod_{k=1}^n x_k^{\alpha}\e^{-x_k}dx_k\\
=&\prod_{k=1}^n k!\cdot\Gamma(\alpha+k).
\end{align*}
For any interval $I\subset[0,\infty)$, the probability that all the eigenvalues of LUE lie in $I$ is given by
\begin{align}
P\left(\{x_k\in I:   k=1,\cdots,n\}\right) =&\frac{1}{C_n}\cdot\frac{1}{n!}\int_{I^n}\prod_{1\leq i<j\leq n}(x_i-x_j)^2\prod_{k=1}^n x_k^{\alpha}\e^{-x_k}dx_k\nonumber\\
=&\frac{1}{C_n}\det\left(\int_I x^{i+j}x^{\alpha}\e^{-x}dx\right)_{i,j=0}^{n-1},\label{pLUE}
\end{align}
where the second equality is due to Heine's formula (see e.g. \cite[sections 2.1 and 2.2]{Szego}) and the determinant is called the Hankel determinant.

In this paper, we are concerned with the Hankel determinant generated by the moments of the Laguerre weight multiplied by a factor with $m$ jump discontinuities, namely
\begin{align}\label{m-HD}
D_n\left(\,\vec{t}\,\right):=\det\left(\int_{0}^{\infty} x^{i+j}w(x;\vec{t}\,)dx\right)_{i,j=0}^{n-1},
\end{align}
where $\vec{t}=\left(t_1,\cdots,t_m\right)$ with $0<t_1<\cdots<t_m$, and the weight function reads
\begin{equation}\label{eq:weight}
w(x;\vec{t}\,):=x^{\al}\e^{-x}\left(\omega_0+\sum_{k=1}^m \omega_k\theta(x-t_k)\right),\qquad\qquad \al>-1.
\end{equation}
Here $\sum\limits_{k=0}^\ell \omega_k\geq0$ for $\ell=0,1,\cdots,m$,
and $\theta(x)$ is the step function which is 1 for $x>0$ and 0 otherwise.

When $m=2$, according to \eqref{pLUE} and \eqref{m-HD}, we find that the probability that the interval $(t_1,t_2)$ has no or all eigenvalues of LUE is given by $D_n(t_1,t_2)/C_n$ with $\omega_0=1,\omega_1=-1,\omega_2=1$ and $\omega_0=0,\omega_1=1,\omega_2=-1$, respectively. The latter probability was studied in \cite{BasorChenZhang} by using the ladder operator approach \cite{Ismail}, a formalism adapted to monic polynomials orthogonal with respect to $w(x;t_1,t_2)$, and the  logarithmic derivative of $D_n(t_1,t_2)$ was shown to satisfy a second order partial differential equation (PDE for short) which can be viewed as a two-variable generalization of the Painlev\'{e} V equation ($P_V$ for short).

For $w(x;\vec{t}\,)$, we can define the associated monic orthogonal polynomials by
\begin{align}\label{m-or}
\int_{0}^{\infty}P_n(x;\vec{t}\,)P_m(x;\vec{t}\,)w(x;\vec{t}\,)dx=h_{n}\delta_{mn},\qquad m,n\geq0,
\end{align}
where $\delta_{mn}=1$ for $m=n$ and 0 otherwise, and
\begin{align}\label{defPn}
P_n\left(x;\vec{t}\,\right):=x^n+p(n,\vec{t}\,)x^{n-1}+\cdots+P_n(0;\vec{t}\,).
\end{align}
From this definition, there follows the three-term recurrence relation
\begin{align}\label{OP-recu}
xP_n(x;\vec{t}\,)=P_{n+1}(x;\vec{t}\,)+\al_n(\vec{t}\,)P_n(x;\vec{t}\,)+\bt_n(\vec{t}\,)P_{n-1}(x;\vec{t}\,),\qquad n\geq0,
\end{align}
where $P_0(x;\vec{t}\,):=1$ and $\bt_0P_{-1}(x;\vec{t}\,):=0$, and the recurrence coefficients are given by
\begin{align}
\al_n(\vec{t}\,)=&p(n,\vec{t}\,)-p(n+1,\vec{t}\,),\qquad n\geq0,\label{m-alp}\\ \bt_n(\vec{t}\,)=&\frac{h_n(\vec{t}\,)}{h_{n-1}(\vec{t}\,)},\qquad n\geq1,\nonumber
\end{align}
with $p(0,\vec{t}\,):=0$.
The recurrence relation indicates the following Christoffel-Darboux formula
\[\sum_{k=0}^{n-1}\frac{P_k(x;\vec{t}\,)P_k(y;\vec{t}\,)}{h_k(\vec{t}\,)}=\frac{P_n(x;\vec{t}\,)P_{n-1}(y;\vec{t}\,)-P_{n-1}(x;\vec{t}\,)P_n(y;\vec{t}\,)}{h_{n-1}(\vec{t}\,)(x-y)}.\]
With all these properties, one can derive a pair of ladder operators satisfied by $P_n(z;\vec{t})$ and three compatibility conditions, based on which the ladder operator approach is formulated. We should point out that, given an arbitrary positive function whose moments of all orders exist, the associated monic orthogonal polynomials have the properties mentioned above. For more details, the reader may consult \cite[section 3.2]{Szego} and \cite{Ismail}.

The ladder operator approach was widely used to study problems related to unitary ensembles which involve one or two perturbation variables or jump discontinuites. For example, the Hankel determinant generated by the singularly perturbed Laguerre weight $x^{\al}\e^{-x-t/x}$, which arises from a finite-temperature integrable quantum field theory, was investigated in \cite{ChenIts10} for $x\in(0,\infty)$ and in \cite{LyuGriffinChen19} for $x\in(s,\infty)$;
The Hankel determinant of a two-time deformation of the Laguerre weight characterizes the multiple-input-multiple-output wireless communication system \cite{ChenHaqMcKay13}, and its log-derivative was shown to satisfy a two-variable generalization of the $\sigma$-form of $P_V$. Under double scaling, the log-derivative of the gap probability on $(-a,a)$ for Gaussian and symmetric Jacobi unitary ensembles were found to satisfy the $\sigma$-form of $P_{V}$\cite{LyuChenFan18, MinChen18}; etc.. For problems involving more than two variables, the ladder operator approach is rarely chosen, which may be caused due to the complexity and difficulty of the derivation of the coupled PDEs.

It is well-known that orthogonal polynomials satisfy a Riemann-Hilbert (RH for short) problem. This fact provides an alternative method to study unitary ensembles. By conducting Deift-Zhou steepest descent analysis \cite{Deift} to the RH problem, one can derive asymptotics for orthogonal polynomials and quantities relating unitary ensembles. For problems involving more than one variables, a coupled Painlev\'{e} system is usually established. See, for instance \cite{ACM, CD, DaiXuZhang18, DaiXuZhang19}.

The Hankel determinant generated by the Gaussian weight with $m$ jumps was studied in \cite{WuXu20}. When the dimension of the Hankel matrix is finite, by considering the RH problem for the associated monic orthogonal polynomials and with the help of the Lax pair, a coupled $P_{IV}$ system was derived.
As the jump discontinuities tend to the edge of the spectrum and the dimension becomes large, by carrying out Deift-Zhou steepest descent analysis to the RH problem mentioned above, an asymptotic expression for the orthogonal polynomials in terms of quantities satisfying a coupled $P_{II}$ system was established.
The two-jump case ($m=2$) was investigated in \cite{MinChen19} via the ladder operator approach and a second order PDE was deduced for the  logarithmic derivative of the Hankel determinant.
By making use of the finite dimensional results therein, the first and the second author of the present paper reproduce the aforementioned coupled $P_{IV}$ and $P_{II}$ system of \cite{WuXu20} with $m=2$ \cite{LyuChen20}.
See also \cite{cc,CD,XuDai} for the applications of coupled $P_{II}$ system in the studies Airy kernel determinants with discontinuities and the Painlev\'e-type kernel determinants.

In this paper, we adopt both approaches mentioned above to study $D_n\left(\vec{t}\right)$. We establish a coupled $P_V$ system and a second order PDE for the  logarithmic derivative of $D_n(\vec{t})$. By making comparison and connection between the results obtained by these two different methods, we find that the ladder operators are closely related to the Lax pair of the RH problem for orthogonal polynomials.
We also consider the behavior of $D_n(\vec{t})$ under the double scaling that $\vec{t}\rightarrow\vec{0}$ and $n\rightarrow\infty$ such that
$\vec{t} =\frac{1}{4n} \vec{s}$, with $\vec{s}=(s_1,s_2,\cdots, s_m)$  fixed. We deduce the limiting PDE satisfied by the  logarithmic derivative of the double scaled Hankel determinant, which can be viewed as an $m$-variable generalization of the $\sigma$-form of $P_{III}$.  If further $s_k=\nu_k\tau$ with $\nu_k$ arbitrary given constant, we obtain the expression for the scaled Hankel determinant in terms of quantities that satisfy a coupled ODEs in variable $\tau$ and whose square roots were shown in \cite{cd} to be approximated by the Bessel function.

The main results of this paper are presented as below.

\subsection{Statement of main results}

\subsubsection{Results for finite dimension $n$: PDEs and the coupled $P_{V}$ system}
Define
\begin{equation}\label{eq:sigman}
\sigma_n(\vec{t}):=\delta\,{\ln} D_n(\vec{t}), \qquad \delta:=\sum_{k=1}^m t_k\partial_{t_k},
\end{equation}
where $\partial_{t_k}$ denotes $\frac{\partial}{\partial t_k}$. For ease of notations, we also represent $\frac{\partial^2}{\partial t_k^2}$ and $\frac{\partial^2}{\partial t_j\partial t_k} (j\neq k)$ by $\partial_{t_kt_k}^2$ and $\partial_{t_jt_k}^2$ respectively.

To take the ladder operator approach, we introduce $2m$ auxiliary quantities allied to $P_n(x;\vec{t})$:

\begin{align}\label{eq:Rn}
R_{n,k}(\vec{t}):=&\omega_k\frac{w_0(t_k)}{h_n(\vec{t})}P_n^2(t_k;\vec{t}),\\ \label{eq:rn}
r_{n,k}(\vec{t}):=&\omega_k\frac{w_0(t_k)}{h_{n-1}(\vec{t})}P_n(t_k;\vec{t})P_{n-1}(t_k;\vec{t}),
\end{align}
with $k=1,2,\cdots,m$ and $w_0(x)=x^{\al}\e^{-x}$.
We first derive the Riccati equations for $\{R_{n,k}\}$ and $\{r_{n,k}\}$, which give us a system of PDEs satisfied by $\{R_{n,k}\}$. Then we establish the expression for $\sigma_n$ in terms of $\{R_{n,k}\}$ and $\{r_{n,k}\}$, from which a second order PDE satisfied by $\sigma_n$ is deduced.
\begin{theorem}\label{LO-result}
$R_{n,k}(\vec{t})$ and $r_{n,k}(\vec{t})$, defined in \eqref{eq:Rn} and \eqref{eq:rn}, respectively,  satisfy the following analogs of Riccati equations
\begin{subequations}\label{m-Ric}
\begin{align}
\delta R_{n,k}=&2r_{n,k}+\Bigg(\sum_{j=1}^m t_j R_{n,j}-t_k+2n+\al\Bigg)R_{n,k},\label{m-Ric1}\\
\delta r_{n,k}=&\frac{r_{n,k}^2}{R_{n,k}}-R_{n,k}\left(\dfrac{\bigg(n+\al+\sum\limits_{j=1}^m r_{n,j}\bigg)\bigg(n+\sum\limits_{j=1}^m r_{n,j}\bigg)}{1-\sum\limits_{j=1}^m R_{n,j}}+\sum\limits_{j=1}^m\frac{r_{n,j}^2}{R_{n,j}}\right),\label{m-Ric2}
\end{align}
\end{subequations}
for $k=1,\cdots,m$, from which we establish a coupled second order PDEs for $\{R_{n,k}(\vec{t})\}:$
\begin{equation}\label{m-RPDE}
\begin{aligned}
&\delta^2 R_{n,k}+\bigg(t_k-\sum_{j\neq k} t_j R_{n,j}-2t_k R_{n,k}-2n-\al\bigg)\delta R_{n,k}-R_{n,k}\sum_{j\neq k} t_j\cdot \delta R_{n,j}\\
=&\left(\frac{\bigg(2(n+\al)+\sum\limits_{j=1}^m 2r_{n,j}\bigg)\bigg(2n+\sum\limits_{j=1}^m 2r_{n,j}\bigg)}{2\bigg(\sum\limits_{j=1}^m R_{n,j}-1\bigg)}-\sum\limits_{j=1}^m\frac{(2r_{n,j})^2}{2R_{n,j}}+\sum_{j=1}^m t_j R_{n,j}-t_k\right)R_{n,k}+\frac{(2r_{n,k})^2}{2R_{n,k}},
\end{aligned}
\end{equation}
with $k=1,\cdots,m$, where $2r_{n,k}$ is given by \eqref{m-Ric1} and
\begin{align*}
\delta^2=&\sum_{k=1}^m t_k^2\partial_{t_kt_k}^2+2\sum_{1\leq j<k\leq m} t_j t_k\partial_{t_j t_k}^2+\delta.
\end{align*}
Moreover, $\sigma_n(\vec{t})$, defined in \eqref{eq:sigman},  is expressed in terms of $\{R_{n,k}(\vec{t})\}$ and $\{r_{n,k}(\vec{t})\}$ by
\begin{align}\label{sigmarR}
\sigma_n=\frac{\left(n+\al+\sum\limits_{k=1}^m r_{n,k}\right)\left(n+\sum\limits_{k=1}^m r_{n,k}\right)}{\sum\limits_{k=1}^m R_{n,k}-1}-\sum\limits_{k=1}^m\frac{r_{n,k}^2}{R_{n,k}}+\sum_{k=1}^m t_k r_{n,k}+n(n+\al),
\end{align}
and $\sigma_n(\vec{t})$ satisfies the following PDE
\begin{equation}\label{m-sigmaPDE}
\begin{aligned}
&\bigg(\sum\limits_{k=1}^m{\rm sgn}(\omega_k)\sqrt{\Delta_k(\vec{t})}-2\left(\delta\sigma_n-\sigma_n+n(n+\al)\right)\bigg)^2-\bigg(\sum_{k=1}^m \sum_{j=1}^m t_j\cdot\partial_{t_kt_j}^2\sigma_n\bigg)^2\\
&=4\left(\delta\sigma_n-\sigma_n+n(n+\al)\right)\bigg(n+\al+\sum\limits_{k=1}^m \partial_{t_k}\sigma_n\bigg)\bigg(n+\sum\limits_{k=1}^m \partial_{t_k}\sigma_n\bigg),
\end{aligned}
\end{equation}
where ${\rm sgn}(\omega_k)$ is the sign function of $\omega_k$, which is $1$ for $\omega_k>0$, $-1$ for $\omega_k<0$ and $0$ for $\omega_k=0$. Here
\begin{align*}
\Delta_k(\vec{t}):=&\bigg(\sum_{j=1}^m t_j\cdot\partial_{t_kt_j}^2\sigma_n\bigg)^2+4\left(\delta\sigma_n-\sigma_n+n(n+\al)\right)\left(\partial_{t_k}\sigma_n\right)^2.
\end{align*}
\end{theorem}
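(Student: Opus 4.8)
The plan is to run the ladder-operator formalism of \cite{Ismail} for the discontinuous weight \eqref{eq:weight} and then extract the four assertions in turn. Writing the potential $\mathsf{v}(x)=-\ln w(x;\vec{t}\,)=x-\al\ln x-\ln\!\big(\omega_0+\sum_{k=1}^m\omega_k\theta(x-t_k)\big)$, away from the jumps its derivative is the smooth Laguerre part $\mathsf{v}'(x)=1-\al/x$, while the step factor differentiates to the Dirac masses $\sum_k\omega_k\delta(x-t_k)/\mathcal J(x)$, where $\mathcal J$ is the bracketed jump factor. Substituting this into the integral representations of the lowering-operator coefficients $A_n(x)$ and $B_n(x)$, the constant $1$ drops out of the divided difference, the term $-\al/x$ contributes $\al/(xy)$, and each Dirac mass yields a simple pole at $t_k$. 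Since $w(t_k)=w_0(t_k)\mathcal J(t_k)$ cancels the factor $\mathcal J(t_k)$ coming from the mass, the residues are exactly the quantities of \eqref{eq:Rn}--\eqref{eq:rn}, so that
\[
A_n(x)=\frac{a_n}{x}+\sum_{k=1}^m\frac{R_{n,k}}{x-t_k},\qquad B_n(x)=\frac{b_n}{x}+\sum_{k=1}^m\frac{r_{n,k}}{x-t_k},
\]
with origin residues $a_n,b_n$ governed by the Laguerre part. First I would feed these partial fractions into the three compatibility conditions $(S_1)$, $(S_2)$ and $(S_2')$ of the formalism and equate principal parts at $x=0$, at each simple pole $x=t_k$, and at each double pole $(x-t_k)^{-2}$. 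Matching at the origin pins $a_n,b_n$ to the combinations $1-\sum_kR_{n,k}$ and $n+\sum_kr_{n,k}$ (up to the $\al$-shift visible in \eqref{m-Ric2}), so that no unknowns beyond $\{R_{n,k}\}$, $\{r_{n,k}\}$ and the recurrence coefficients $\al_n,\bt_n$ survive; matching at the $t_k$ produces the algebraic web relating these quantities across levels $n,n\pm1$, the quadratic identity from $(S_2')$ being the decisive one.

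Second, I would supply the $\vec{t}$-dependence, which the purely algebraic compatibility relations do not see. Differentiating $h_n=\int_0^\infty P_n^2\,w\,dx$ and using $\partial_{t_k}w=-\omega_k w_0(t_k)\delta(x-t_k)$ together with orthogonality gives the clean flow $\partial_{t_k}\ln h_n=-R_{n,k}$, while differentiating the orthogonality relations \eqref{m-or} and resumming through the Christoffel--Darboux formula yields
\[
\partial_{t_k}P_n(x)=\omega_k w_0(t_k)P_n(t_k)\,\frac{P_{n-1}(t_k)P_n(x)-P_n(t_k)P_{n-1}(x)}{h_{n-1}\,(x-t_k)}.
\]
Evaluating this (and its companion for $P_{n-1}$) at $x=t_j$ and assembling the scaling operator $\delta=\sum_j t_j\partial_{t_j}$ expresses $\delta R_{n,k}$ and $\delta r_{n,k}$ through level-$n$ data; combining with the algebraic identities of the first step to remove the $n\pm1$ quantities and $\al_n,\bt_n$ delivers the Riccati system \eqref{m-Ric}. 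The coupled PDE \eqref{m-RPDE} is then purely mechanical: apply $\delta$ to \eqref{m-Ric1} to obtain $\delta^2R_{n,k}$, and use \eqref{m-Ric1} and \eqref{m-Ric2} to substitute for $2r_{n,k}$ and $\delta r_{n,k}$, eliminating all $r_{n,k}$.

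Third, for the closed form \eqref{sigmarR} I would exploit that $D_n=\prod_{j=0}^{n-1}h_j$, so $\partial_{t_k}\ln D_n=-\sum_{j=0}^{n-1}R_{j,k}$ and hence $\sigma_n=\delta\ln D_n=-\sum_k t_k\sum_{j=0}^{n-1}R_{j,k}$ by \eqref{eq:sigman}. The sum $\sum_{j=0}^{n-1}A_j(x)$ appearing in $(S_2')$ telescopes precisely these partial sums into level-$n$ data, so reading off its residues and invoking the level-$n$ identities converts the $j$-sum into the single-level expression \eqref{sigmarR}. For the scalar equation \eqref{m-sigmaPDE} I would then differentiate \eqref{sigmarR} and use the Riccati system \eqref{m-Ric} to write the gradient $\partial_{t_k}\sigma_n$, the weighted Hessian entries $\sum_j t_j\,\partial^2_{t_kt_j}\sigma_n$ and $\delta\sigma_n$ as explicit rational functions of $\{R_{n,k}\}$, $\{r_{n,k}\}$, matching the symmetric blocks $n+\al+\sum_k\partial_{t_k}\sigma_n$, $n+\sum_k\partial_{t_k}\sigma_n$ and $\delta\sigma_n-\sigma_n+n(n+\al)$ to the combinations appearing in \eqref{sigmarR}. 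Solving for the individual $R_{n,k}$ is then a quadratic whose discriminant is exactly $\Delta_k$; the correct root is selected by the sign of $R_{n,k}=\omega_k w_0(t_k)P_n^2(t_k)/h_n$, which equals ${\rm sgn}(\omega_k)$ because $h_n>0$ under the positivity hypothesis $\sum_{k=0}^\ell\omega_k\ge0$ that makes $w(x;\vec{t}\,)\ge0$. This is the source of the factor ${\rm sgn}(\omega_k)\sqrt{\Delta_k}$, and substituting back into the one remaining level-$n$ identity gives \eqref{m-sigmaPDE}.

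The main obstacle is the cross-variable bookkeeping inherent in having $m$ simultaneous deformations. In the second step the mixed flows $\partial_{t_j}R_{n,k}$ and $\partial_{t_j}r_{n,k}$ with $j\neq k$ must be controlled so that everything closes under the single scaling operator $\delta$ rather than in the individual $\partial_{t_k}$; checking that the off-diagonal contributions organize into the symmetric sums $\sum_j t_jR_{n,j}$, $\sum_j r_{n,j}$ and $\sum_j R_{n,j}$ that appear in \eqref{m-Ric} is where the computation is heaviest, and it is the step that genuinely generalizes the known $m=2$ case. A secondary difficulty lies in the final elimination: the data coming from $\sigma_n$ naturally supply only the symmetric combinations together with the weighted Hessian rows, so recovering each $R_{n,k}$ forces the passage to a quadratic; one must verify that $\Delta_k$ is a perfect square in the auxiliary variables and that the sign prescription ${\rm sgn}(\omega_k)$ is globally consistent, so that \eqref{m-sigmaPDE} is a bona fide polynomial relation in the derivatives of $\sigma_n$.
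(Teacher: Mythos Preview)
Your overall strategy coincides with the paper's: compute $A_n,B_n$ as partial fractions, read off the difference relations from $(S_1),(S_2),(S_2')$, inject the $\vec{t}$-dependence via $\partial_{t_k}\ln h_n=-R_{n,k}$ and $\partial_{t_k}p(n,\vec{t})=r_{n,k}$, and pass to $\sigma_n$ through $p(n,\vec{t})$ and $\bt_n$. The one substantive difference is exactly at the step you flag as heaviest. You propose to obtain $\delta R_{n,k}$ and $\delta r_{n,k}$ by differentiating $P_n(x)$ explicitly via Christoffel--Darboux, evaluating at every $t_j$, and then summing the weighted $\partial_{t_j}$-flows. The paper sidesteps this computation entirely with a one-line observation: since $R_{n,k}=-\partial_{t_k}\ln h_n$ and $r_{n,k}=\partial_{t_k}p(n,\vec{t})$, equality of mixed second partials forces the symmetries $\partial_{t_j}R_{n,k}=\partial_{t_k}R_{n,j}$ and $\partial_{t_j}r_{n,k}=\partial_{t_k}r_{n,j}$. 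Then
\[
\partial_{t_k}\al_n=\partial_{t_k}\Big(\sum_j t_jR_{n,j}\Big)=R_{n,k}+\sum_j t_j\,\partial_{t_k}R_{n,j}=R_{n,k}+\sum_j t_j\,\partial_{t_j}R_{n,k}=R_{n,k}+\delta R_{n,k},
\]
and combining this with $\partial_{t_k}\al_n=r_{n,k}-r_{n+1,k}=2r_{n,k}+(\al_n-t_k)R_{n,k}$ (from $(S_1)$) gives \eqref{m-Ric1} immediately; the identity $\partial_{t_k}\bt_n=\delta r_{n,k}$ obtained the same way from $p(n,\vec{t})=\sum_j t_jr_{n,j}-\bt_n$ gives \eqref{m-Ric2}. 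So the cross-variable bookkeeping you anticipate as the main obstacle evaporates once you notice that $R_{n,\bullet}$ and $r_{n,\bullet}$ are gradients of scalar potentials. Your direct route would work too, but is needlessly laborious by comparison. For the sign selection in the final quadratic $\bt_nR_{n,k}^2+(\partial_{t_k}\bt_n)R_{n,k}-r_{n,k}^2=0$, note that the two roots have opposite signs (their product is $-r_{n,k}^2/\bt_n<0$), so your argument via ${\rm sgn}(R_{n,k})={\rm sgn}(\omega_k)$ is correct and in fact a touch cleaner than the paper's version, which instead tracks the sign of $R_{n,k}+\tfrac12\partial_{t_k}\ln\bt_n=\tfrac12(R_{n-1,k}+R_{n,k})$.
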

\begin{remark}
Supposing $\vec{t}=\eta(b_1,\cdots,b_m)=:\eta\vec{b}$,  $\eta\in\mathbb{R}$, we have $\delta=\eta\frac{d}{d\eta}$ and \eqref{m-RPDE} is reduced to a second order ODE satisfied by $R_{n,k}$, reading
\begin{equation*}
\begin{aligned}
&\eta^2R_{n,k}''+\eta\bigg(\eta\Big(b_k-\sum_{j\neq k} b_j R_{n,j}-2b_k R_{n,k}\Big)-2n+1-\al\bigg)R_{n,k}'-\eta^2R_{n,k}\sum_{j\neq k} b_jR_{n,j}'\\
=&\left(\frac{\bigg(2(n+\al)+\sum\limits_{j=1}^m 2r_{n,j}\bigg)\bigg(2n+\sum\limits_{j=1}^m 2r_{n,j}\bigg)}{2\bigg(\sum\limits_{j=1}^m R_{n,j}-1\bigg)}-\sum\limits_{j=1}^m\frac{(2r_{n,j})^2}{2R_{n,j}}+\eta\sum_{j=1}^m b_j R_{n,j}-\eta b_k\right)R_{n,k}+\frac{(2r_{n,k})^2}{2R_{n,k}},
\end{aligned}
\end{equation*}
for $k=1,\cdots,m$, where \[2r_{n,k}=\eta R_{n,k}'-\Bigg(\eta\sum_{j=1}^m b_j R_{n,j}-\eta b_k+2n+\al\Bigg)R_{n,k}.\]
\end{remark}

\begin{remark}
When $m=1$, with $t_1$ replaced by $t$ and $R_{n,1}$ by $R_n$, equation \eqref{m-RPDE} is reduced to
\begin{equation}\label{1-Rn-eq}
\begin{aligned}
R_{n}''
=&\frac{2R_n-1}{2R_n(R_n-1)}\left(R_n'\right)^2-\frac{R_n'}{t}+R_n^3+\left(\frac{2n+1+\alpha}{t}-\frac{3}{2}\right)R_n^2\\
&+\left(-\frac{2n+1+\alpha}{t}+\frac{1}{2}\right)R_n-\frac{\alpha^2}{2t^2}\cdot\frac{R_n}{R_n-1},
\end{aligned}
\end{equation}
which can be transformed into $P_V\left(0,-\alpha^2/2,2n+1+\alpha,-1/2\right)$ satisfied by $1-1/R_n(t)$.
Putting $m=1$ in \eqref{m-sigmaPDE}, we get
\begin{align}\label{1-sg-eq}
(t\sigma_n'')^2=\left((2n+\al-t)\sigma_n'+\sigma_n\right)^2-4\left(t\sigma_n'-\sigma_n+n(n+\al)\right)(\sigma_n')^2,
\end{align}
which is the Jimbo-Miwa-Okamoto (J-M-O for short) $\sigma$-form of  the above-mentioned $P_V$.
Equations \eqref{1-Rn-eq} and \eqref{1-sg-eq} agree with (4.8) and (4.11) of \cite{BasorChen09}, respectively.

With these discussions, we may treat \eqref{m-RPDE} with $R_{n,k}$ replaced by $1/(1-R_{n,k})$ and \eqref{m-sigmaPDE} as an $m$-variable generalization of $P_V$ and the J-M-O $\sigma$-form of $P_V$, respectively.
\end{remark}

Denote by $H_V$ the Hamiltonian for $P_V\left(\frac{n^2}{2},-\frac{(n+\al)^2}{2},\al+1,-\frac{1}{2}\right)$, i.e.
\begin{equation}\label{eq:H-v}
  tH_{V}(t, u,v)=u^2v(v-1)^2+tuv+(n+\al)u(v-1)^2-\al uv(v-1).
\end{equation}
See \cite[equations (0.4) and (0.5) ]{Okamoto} and \cite[equations (2.1)--(2.5)]{FW}. By establishing the RH problem for $P_n(x;\vec{t})$ and with the aid of Lax pair, we show that $\sigma_n(\vec{t})$ can be expressed in terms of a coupled $P_V$ system.
\begin{theorem}\label{cpv}
$\sigma_n(\vec{t})$ is connected with the Hamiltonian \eqref{eq:H-v} by
  \begin{equation}\label{eq:H}
\sigma_n(\vec{t})=\sum_{k=1}^m t_kH_{V}(t_k, u_k, v_k)  +\frac{1}{2}\sum_{ j\neq k}u_ju_k(v_j+v_k)(v_j-1)(v_k-1),
\end{equation}
where $u_k(\vec{t})$ and $v_k(\vec{t}), k=1,\cdots,m$, satisfy the coupled $P_V$ system
\begin{equation}\label{m-eq:cpv}
 \left\{ \begin{array}{l}
                   \delta u_k=-t_k u_k-u_k\sum\limits_{j=1}^mu_j(v_j-1)(2v_k+v_j-1)-2nu_kv_k+(2n+\alpha)  u_k, \\
                     \delta v_k=t_k v_k+(v_k-1)\sum\limits_{j= 1}^mu_j(v_j-1)(v_k+v_j)+(n+\al)(v_k-1)^2-\al v_k(v_k-1),
\end{array}\right.
\end{equation}
which is equivalent to the following Hamiltonian formulation
\begin{equation}\label{eq:H-equation}
  \delta v_k=\frac{\partial \sigma_n}{\partial u_k} ,\qquad\qquad \delta u_k=-\frac{\partial \sigma_n}{\partial v_k}, \qquad\qquad k=1,\cdots,m.
\end{equation}
\end{theorem}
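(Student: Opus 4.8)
The plan is to follow the Riemann--Hilbert / Lax pair route announced in the text, treating the jump points $t_1,\dots,t_m$ as (apparent) singularities of a $z$-dependent linear system whose isomonodromic deformation in the $t_k$ reproduces the coupled $P_V$ dynamics. First I would set up the standard RH problem for the $2\times2$ matrix $Y(z)$ whose first row is $(P_n(z;\vec{t}),\ \text{its Cauchy transform})$: $Y$ is analytic off $[0,\infty)$, has the upper-triangular jump $\left(\begin{smallmatrix}1 & w(x;\vec{t}\,)\\ 0 & 1\end{smallmatrix}\right)$ across the positive axis, the normalization $Y(z)=(I+O(z^{-1}))z^{n\sigma_3}$ at infinity, and the prescribed $z^\alpha$ / logarithmic behaviour at $0$ and at each $t_k$. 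Because the factor $\omega_0+\sum_k\omega_k\theta(x-t_k)$ is piecewise constant, multiplying $Y$ by suitable diagonal and triangular factors absorbing $x^\alpha\e^{-x}$ and the step weights converts this into a RH problem for a matrix $\Psi(z)$ whose jumps are all piecewise \emph{constant} (independent of $z$ and of $\vec{t}$), with a rank-one local structure at each $t_k$ coming from the single jump $\omega_k$.

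Since the jumps of $\Psi$ no longer depend on $z$ or $t_k$, the logarithmic derivatives $A(z):=(\partial_z\Psi)\Psi^{-1}$ and $B_k(z):=(\partial_{t_k}\Psi)\Psi^{-1}$ are single-valued and rational, giving the Lax pair. Matching the singular behaviour fixes their pole structure,
\[ A(z)=A_\infty+\frac{A_0}{z}+\sum_{k=1}^m\frac{A_k}{z-t_k}, \qquad B_k(z)=B_{k,\infty}-\frac{A_k}{z-t_k}, \]
with $A_\infty$ the constant term coming from $\e^{-z}$ (leading part $-\tfrac12\sigma_3$), $A_0$ carrying the exponent $\alpha$ from $z^\alpha$, and each residue $A_k$ of rank one from the single jump at $t_k$. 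I would then define $u_k(\vec{t})$ and $v_k(\vec{t})$ as the natural coordinates on the rank-one residue $A_k$ (together with the subleading coefficient of $\Psi$ at infinity), chosen so that the constraints $\tr A_k=0$ and $\det A_k=0$ are automatic.

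With these definitions the coupled system \eqref{m-eq:cpv} follows from the zero-curvature condition $\partial_{t_k}A-\partial_z B_k+[A,B_k]=0$: reading off the residue at $z=t_k$ gives the $t_k$-evolution of $A_k$, while the residue at infinity couples $A_k$ to \emph{all} the other $A_j$ through $A_\infty$ and $B_{k,\infty}$, producing exactly the sums $\sum_j u_j(v_j-1)(\cdots)$ on the right-hand sides. To obtain the expression \eqref{eq:H} for $\sigma_n$, I would use $D_n=\prod_{j=0}^{n-1}h_j$ together with the isomonodromic tau-function identity expressing $\partial_{t_k}\ln D_n$ as a residue built from the Lax matrices; summing $\delta\ln D_n=\sum_k t_k\,\partial_{t_k}\ln D_n$ and organising the result by poles separates a diagonal part, equal to $t_kH_V(t_k,u_k,v_k)$ for the Hamiltonian \eqref{eq:H-v}, from the pole--pole interaction part, equal to the cross terms $\tfrac12\sum_{j\neq k}u_ju_k(v_j+v_k)(v_j-1)(v_k-1)$. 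Finally, the equivalence with \eqref{eq:H-equation} is a direct verification: differentiating the explicit $\sigma_n$ of \eqref{eq:H} in $u_k$ and $v_k$ and collecting the $j=k$ and $j\neq k$ contributions reproduces, respectively, the second and first equations of \eqref{m-eq:cpv}.

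I expect the main obstacle to be twofold. First, carrying out the transformation $Y\mapsto\Psi$ so that all jumps become constant while correctly tracking the rank-one singular structure and connection matrices at the $t_k$ (and the $z^\alpha$/logarithmic behaviour at $0$) is the delicate step that legitimises writing down the rational Lax pair at all. Second, and more computational, is matching the cross terms: the interaction part of $\delta\ln D_n$ arises entirely from how the residue at infinity mixes the distinct residues $A_j$, and confirming that this mixing assembles into precisely $\tfrac12\sum_{j\neq k}u_ju_k(v_j+v_k)(v_j-1)(v_k-1)$ — rather than some other symmetric combination — requires careful bookkeeping of the subleading expansion of $\Psi$ at infinity. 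As a consistency check one can, when $m=1$, verify that \eqref{eq:H} collapses to $tH_V$ and that the system reduces to the scalar $P_V$ Hamiltonian flow, and one can cross-check $u_k,v_k$ against the ladder-operator quantities $R_{n,k},r_{n,k}$ of Theorem \ref{LO-result} through \eqref{sigmarR}.
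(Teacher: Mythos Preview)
Your outline is correct and would succeed, but it is organised differently from the paper's argument in two places worth noting.

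First, for the Lax pair the paper does not work with the individual Schlesinger-type flows $B_k(z)=B_{k,\infty}-A_k/(z-t_k)$. Instead it rescales the spectral variable, setting $\Psi(z)=t_1^{-(n+\alpha/2)\sigma_3}Y(t_1z)(t_1z)^{\alpha\sigma_3/2}e^{-t_1z\sigma_3/2}$, so that the finite singularities sit at $a_k=t_k/t_1$. Since $\delta a_k=0$, the $\delta$-Lax matrix $U(z)=-\tfrac{t_1}{2}z\sigma_3+\hat B$ is a first-degree polynomial in $z$ with no poles at all. The single zero-curvature equation $\delta L-\partial_z U+[L,U]=0$ then decouples pole by pole into $\delta\hat A_0=[\hat B,\hat A_0]$ and $\delta\hat A_k=[\hat B-\tfrac{t_k}{2}\sigma_3,\hat A_k]$, which after the rank-one parametrisation $\hat A_k=\left(\begin{smallmatrix}-u_kv_k & u_ky\\ -u_kv_k^2/y & u_kv_k\end{smallmatrix}\right)$ gives \eqref{m-eq:cpv} directly in the $\delta$-form, without having to sum $m$ separate compatibility relations. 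Your route via the individual $\partial_{t_k}$ flows also works but requires assembling the $\delta$-equations afterwards.

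Second, for the Hamiltonian \eqref{eq:H} the paper does not invoke the Jimbo--Miwa--Ueno tau-function formula. It uses instead the elementary identification $\sigma_n=p(n,\vec t)+n(n+\alpha)=t_1(\Psi_1)_{11}+n(n+\alpha)$, already obtained in the ladder-operator section, and then reads off $t_1(\Psi_1)_{11}$ from the $1/z^2$ coefficient of the $z$-equation of the Lax pair; this gives the compact form $\sigma_n=b_1b_2+\sum_k t_ku_kv_k+n(n+\alpha)$, which after inserting the definitions of $b_1,b_2$ unpacks into \eqref{eq:H}. Your tau-function approach is more self-contained (it does not lean on the ladder-operator identity), whereas the paper's shortcut avoids the residue bookkeeping you flagged as the main difficulty and lands on the closed form $b_1b_2+\sum t_ku_kv_k$ in one step. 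The verification of \eqref{eq:H-equation} is done exactly as you describe.
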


\begin{remark}
Setting $m=1$ in \eqref{m-eq:cpv} and replacing $t_1$ by $t$, we have
\begin{equation*}
 \Bigg\{ \begin{array}{l}
                   tu_1'=-t u_1-u_1^2(v_1-1)(3v_1-1)-2nu_1v_1+(2n+\alpha)  u_1, \\
 tv_1'=t v_1+2u_1v_1(v_1-1)^2+\frac{\alpha}{2}(1-v_1^2)+(n+\frac{\alpha}{2}) (v_1-1)^2.
\end{array}\Bigg.
\end{equation*}
Solving $u_1$ from the second equation and substituting it into the first one, we obtain
\begin{align*}
v_1''=&\left(\frac{1}{2v_1}+\frac{1}{v_1-1}\right)(v_1')^2-\frac{v_1'}{t}+\frac{(v_1-1)^2}{t^2}\left(\frac{n^2}{2}v_1-\frac{(n+\al)^2}{2v_1}\right)\\
&-(\al+1)\frac{v_1}{t}-\frac{1}{2}\frac{v_1(v_1+1)}{v_1-1},
\end{align*}
which is $P_V\left(\frac{n^2}{2},-\frac{(n+\al)^2}{2},\al+1,-\frac{1}{2}\right)$.
\end{remark}

Combining the results obtained by the ladder operator approach and by solving the RH problem for $P_n(x;\vec{t})$, we build the relationship between $\{R_{n,k},r_{n,k}\}$ which satisfy the $m$-variable generalization of $P_V$ and $\{u_k,v_k\}$ which are solutions of the coupled $P_V$ system.
In addition, we obtain expressions for a few quantities allied with $P_n(x;\vec{t})$ in terms of both $\{R_{n,k}, r_{n,k}\}$ and $\{u_k, v_k\}$, including the recurrence coefficients $\al_n(\vec{t})$ and $\bt_n(\vec{t})$, the $L^2$ norm of the monic orthogonal polynomial and its derivative with respect to $t_k$. To continue, we define $b_1(\vec{t})$ and $b_2(\vec{t})$ by
\begin{subequations}\label{eq:b-k}
\begin{align}
 b_1(\vec{t}):=&- \sum\limits_{k=1}^mu_k(v_k-1)-n,\label{defb1}\\
 b_2(\vec{t}):=&- \sum\limits_{k=1}^mu_kv_k(v_k-1)+n+\alpha.\label{defb2}
 \end{align}
\end{subequations}
\vspace{-6mm}
\begin{theorem}\label{LO-RHP}
For $k=1,\cdots, m$, $R_{n,k}(\vec{t})$ and $r_{n,k}(\vec{t})$ defined in \eqref{eq:Rn} and \eqref{eq:rn}  are connected with $u_k(\vec{t})$ and $v_k(\vec{t})$ by
\begin{subequations}\label{Rruv}
\begin{align}
                  R_{n,k}=&\frac{u_k}{b_1},\label{Rub}\\
                  r_{n,k}=&u_kv_k,\label{ruv}
\end{align}
\end{subequations}
or equivalently,
\begin{subequations}\label{uvRr}
\begin{align}
u_k=&R_{n,k}\cdot\dfrac{\sum\limits_{j=1}^m r_{n,j}+n}{\sum\limits_{j=1}^m R_{n,j}-1},\label{uRr}\\
v_k=&\dfrac{r_{n,k}}{R_{n,k}}\cdot\frac{\sum\limits_{j=1}^m R_{n,j}-1}{\sum\limits_{j=1}^m r_{n,j}+n}.\label{vRr}
\end{align}
\end{subequations}
And we have the following expressions
\begin{subequations}\label{alRu}
\begin{align}
 \al_n(\vec{t})=&\sum\limits_{k=1}^m t_k R_{n,k}+2n+1+\al\\
 =&\frac{1}{ b_1}\sum\limits_{k=1}^m t_k u_k+2n+1+\al,
\end{align}
\end{subequations}
\begin{subequations}\label{btRrb}
\begin{align}
\beta_{n}(\vec{t})=&\dfrac{\left(n+\al+\sum\limits_{k=1}^m r_{n,k}\right)\cdot\left(n+\sum\limits_{k=1}^m r_{n,k}\right)}{1-\sum\limits_{k=1}^m R_{n,k}}+\sum\limits_{k=1}^m\frac{r_{n,k}^2}{R_{n,k}}\label{btrR}\\
=&-b_1b_2,\label{btb1b2}
\end{align}
\end{subequations}
\begin{align}
h_{n}(\vec{t})=&-2\pi it_1^{2n+\alpha}b_1y,\qquad\qquad\label{hnt1b1y}\\
\partial_{t_k}\ln h_n(\vec{t})=&-R_{n,k}=-\frac{u_k}{b_1},\qquad\qquad\label{m-DhR}
\end{align}
where $y(\vec{t})$ is connected with $u_k(\vec{t})$ and $v_k(\vec{t})$ by $\delta \ln y=\sum\limits_{k=1}^mu_k(v_k-1)^2-2n-\alpha.$
\end{theorem}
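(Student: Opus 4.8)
The plan is to route every identity through the explicit solution $Y(z;\vec{t})$ of the Riemann--Hilbert problem for $P_n(x;\vec{t})$ that underlies Theorem~\ref{cpv}, whose first column is $\big(P_n(z;\vec{t}),\,-\tfrac{2\pi i}{h_{n-1}}P_{n-1}(z;\vec{t})\big)^{T}$. Thus the boundary values of $Y$ at a jump point $t_k$ encode precisely $P_n(t_k)$ and $P_{n-1}(t_k)$, the very ingredients of $R_{n,k},r_{n,k}$ in \eqref{eq:Rn}--\eqref{eq:rn}. The first step is to re-read the definition of $u_k,v_k$ generated in the construction of the Lax pair: they appear as the (rank-one) residue data of $\partial_z Y\,Y^{-1}$ at $z=t_k$, and that residue is controlled by the additive jump $\omega_k w_0(t_k)$ in the weight together with the row normalization $-2\pi i/h_{n-1}$. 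Substituting the explicit entries of $Y(t_k)$ into those definitions, I expect the combinations $\omega_k w_0(t_k)P_n^2(t_k)/h_n$ and $\omega_k w_0(t_k)P_n(t_k)P_{n-1}(t_k)/h_{n-1}$ to emerge verbatim; by \eqref{eq:Rn}--\eqref{eq:rn} these are $R_{n,k}$ and $r_{n,k}$, which immediately gives \eqref{ruv} and puts \eqref{Rub} into the shape $R_{n,k}=u_k/b_1$ with a single $k$-independent scalar $b_1$.

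The second step is to identify that scalar with \eqref{defb1}. Because $u_k/R_{n,k}$ must be common to all $k$, it suffices to fix one global quantity; the natural one comes from the large-$z$ expansion $Y(z)=(I+Y_1/z+\cdots)z^{n\sigma_3}$, which ties $\sum_k u_k$ and $\sum_k u_kv_k$ to the entries of $Y_1$. Matching those entries with the partial sums $\sum_k R_{n,k}$ and $\sum_k r_{n,k}$ appearing in Theorem~\ref{LO-result} yields $b_1=(\sum_k r_{n,k}+n)/(\sum_k R_{n,k}-1)$; inserting $r_{n,k}=u_kv_k$ and $u_k=b_1R_{n,k}$ into $-\sum_k u_k(v_k-1)-n$ then reproduces \eqref{defb1} and, after solving for $u_k,v_k$, the inverse relations \eqref{uvRr}.

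With \eqref{Rruv} established the remaining formulas are algebra. I would prove \eqref{m-DhR} independently of the Riemann--Hilbert picture: differentiating $h_n=\int_0^\infty P_n^2\,w\,dx$ in $t_k$, the term with $\partial_{t_k}P_n$ vanishes since $\partial_{t_k}P_n$ has degree $\le n-1$ and is orthogonal to $P_n$, while $\partial_{t_k}w=-\omega_k w_0(t_k)\delta(x-t_k)$ gives $\partial_{t_k}\ln h_n=-\omega_k w_0(t_k)P_n^2(t_k)/h_n=-R_{n,k}=-u_k/b_1$ by \eqref{Rub}. Equation \eqref{alRu} is the recurrence-coefficient identity already furnished by the compatibility conditions behind Theorem~\ref{LO-result}, into which I substitute \eqref{Rub}. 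Likewise \eqref{btRrb} is the $\beta_n$ expression visible in \eqref{sigmarR} and \eqref{m-Ric2}; substituting \eqref{Rruv} and the definitions \eqref{eq:b-k}, and recognizing $\sum_k u_k(v_k-1)$ and $\sum_k u_kv_k(v_k-1)$, collapses \eqref{btrR} to $-b_1b_2$ after using $b_1=(\sum_k r_{n,k}+n)/(\sum_k R_{n,k}-1)$. Finally \eqref{hnt1b1y} is the standard identity $h_n=-2\pi i\,(Y_1)_{12}$, with the factorization $(Y_1)_{12}=t_1^{2n+\alpha}b_1 y$ read off from the normalization of the Lax-pair solution at the reference point $t_1$ that defines $y$ via $\delta\ln y=\sum_k u_k(v_k-1)^2-2n-\alpha$; a consistency check is that $\delta\ln$ of the right-hand side of \eqref{hnt1b1y} reduces, using the coupled $P_V$ system \eqref{m-eq:cpv}, to $-\sum_k t_k R_{n,k}=\delta\ln h_n$.

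The step I expect to be the genuine obstacle is the second one: pinning down the scalar $b_1$ and verifying that the over-determined system of identifications, namely the single relation $u_k/R_{n,k}=b_1$ for each $k$ together with the two sum rules, is mutually consistent. This hinges on careful bookkeeping of the boundary values of $Y$ on the contour at each $t_k$ and of the normalizations $-2\pi i/h_{n-1}$ and $z^{n\sigma_3}$, where sign or factor slips propagate directly into $b_1$ and $b_2$. Once the global identity $b_1=(\sum_k r_{n,k}+n)/(\sum_k R_{n,k}-1)$ is secured, the balance of Theorem~\ref{LO-RHP} is a substitution exercise into results already available from Theorems~\ref{LO-result} and~\ref{cpv}.
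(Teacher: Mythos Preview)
Your plan is essentially the paper's own route: compute the residue matrices $\hat A_k$ of the Lax pair from the local behaviour of $\Psi$ at $a_k=t_k/t_1$, compare with the parametrization \eqref{eq:Ak} in terms of $u_k,v_k,y$, and read off \eqref{Rruv}; the remaining identities are then substitutions into results already proved in Sections~\ref{sec:LOP} and~\ref{sec:RHP}. So the strategy is correct.

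The one place where you make life harder than necessary is your ``Step~2''. In the paper's setup, $b_1$ is not an unknown scalar to be pinned down by sum rules: it is already fixed by \eqref{eq:B-Psi} as $b_1y=t_1(\Psi_1)_{12}$. Tracking the transformation \eqref{def:Psi} back to \eqref{def:Y} gives $t_1(\Psi_1)_{12}=-h_n/\bigl(2\pi i\,t_1^{2n+\alpha}\bigr)$, which is exactly \eqref{hnt1b1y}. Now compute the $(12)$-entry of $\hat A_k$ in two ways: from \eqref{eq:Ak} it is $u_ky$, while from \eqref{eq: AkPsi} and the explicit first column of $Y$ it equals $-\tfrac{h_n}{2\pi i\,t_1^{2n+\alpha}}\,R_{n,k}=b_1y\,R_{n,k}$. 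Dividing by $y$ gives $R_{n,k}=u_k/b_1$ at once, and the $(11)$-entry similarly gives $r_{n,k}=u_kv_k$. No over-determined system, no sum-rule consistency check, no genuine obstacle. The formula $b_1=\bigl(\sum_j r_{n,j}+n\bigr)\big/\bigl(\sum_j R_{n,j}-1\bigr)$ you write down is then a \emph{consequence}, obtained by inserting \eqref{Rruv} into \eqref{defb1}; that is how the paper gets the inverse relations \eqref{uvRr}.

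Two minor remarks. First, your phrase ``residue data of $\partial_z Y\,Y^{-1}$ at $z=t_k$'' is slightly loose since $Y$ itself has a jump along $(0,\infty)$; the paper works with the gauge-transformed $\Psi$ and its analytic local factor $\Psi^{(k)}$ in \eqref{eq: Psi-ak}, which makes the residue computation honest. Second, the paper in Section~3.4 re-derives \eqref{m-DhR} and \eqref{alRu} entirely from the Lax pair (via \eqref{eq:Dsk}--\eqref{eq:Ds1} and \eqref{eq:alphaPsi}), rather than quoting the ladder-operator results from Section~\ref{sec:LOP}; your choice to cite Section~\ref{sec:LOP} instead is perfectly valid and arguably shorter.
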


\begin{remark}
From the coupled $P_V$ system \eqref{m-eq:cpv}, we can derive the Riccati equations \eqref{m-Ric} which give rise to the second order PDEs satisfied by $R_{n,k}$. In the following derivation, we always bear \eqref{Rruv} in mind.

Indeed, differentiating both sides of \eqref{Rub}, we have
\begin{align}
\delta R_{n,k}=&\frac{\delta u_k}{b_1}-\frac{u_k\cdot\delta b_1}{b_1^2}.\label{Ric1-1}
\end{align}
Taking the derivative of both sides of \eqref{defb1}, and using \eqref{m-eq:cpv} to replace $\delta u_k$ and $\delta v_k$ in the resulting expression, we find
\begin{align*}
\delta b_1=-\sum_{k=1}^m t_k u_k-b_1\sum_{k=1}^m u_k(v_k-1)^2.
\end{align*}
Inserting it and $\delta u_k$ given by the first equation of \eqref{m-eq:cpv} into \eqref{Ric1-1}, we come to the Riccati equation \eqref{m-Ric1}.

To derive \eqref{m-Ric2}, we apply $\delta$ to \eqref{ruv} and get in view of \eqref{m-eq:cpv}
\begin{align}
\delta r_{n,k}=&\delta u_k\cdot v_k+u_k\cdot \delta v_k=\frac{r_{n,k}^2}{R_{n,k}}+b_1b_2 R_{n,k}.\label{m-Ric2-1-1}
\end{align}
From the definitions of $b_1$ and $b_2$, it follows that
\begin{align}
b_1b_2=&\dfrac{\left(n+\al+\sum\limits_{k=1}^m r_{n,k}\right)\cdot\left(n+\sum\limits_{k=1}^m r_{n,k}\right)}{\sum\limits_{k=1}^m R_{n,k}-1}-\sum_{k=1}^m\frac{r_{n,k}^2}{R_{n,k}}.\label{b1b2}
\end{align}
Combining it with \eqref{m-Ric2-1-1} gives us \eqref{m-Ric2}.
\end{remark}
\subsubsection{Double scaling analysis at the hard edge: generalized $P_{III}$}
We consider the hard edge scaling limit of the Hankel determinant $D_n(\vec{t})$ under the assumption that $n\rightarrow\infty$ and $t_k\rightarrow0^+$ such that
\begin{align*}
s_k=4n t_k,\qquad k=1,2,\cdots, m,
\end{align*}
are fixed. This scaling is motivated by the following fact: if $t_j=0$ for $j=1,\cdots,m-1$ while $t_m\neq0$, then our weight function has only one jump, i.e. $x^{\al}\e^{-x}\left(\sum\limits_{k=0}^{m-1}\omega_k+\omega_m\theta(x-t_m)\right)$,
and its associated Hankel determinant with the special cases $\sum\limits_{k=0}^{m-1}\omega_k=0, \omega_m=1$ and $\sum\limits_{k=0}^{m-1}\omega_k=1, \omega_m=-1$ correspond to the smallest and largest eigenvalue distribution of LUE respectively, both of which under the scaling that $s_m=4nt_m$ tend to the Fredholm determinant of the Bessel kernel.

Assuming
\begin{equation}\label{eq:scalingt}
\vec{t}=\frac{1}{4n}\vec{s}, \quad \vec{s}=(s_1,s_2,\cdots, s_m),
\end{equation}
with $0<s_1<\cdots<s_m$, we have
\[\delta=\sum_{k=1}^m t_k\partial_{t_k}=\sum_{k=1}^m s_k\partial_{s_k}.\]
Define
 \begin{equation}\label{eq:sigmalimit}
\sigma(\vec{s}):=\lim\limits_{n\rightarrow\infty}\sigma_n\left(\frac{\vec{s}}{4n}\right)=\delta\ln\lim\limits_{n\rightarrow\infty}D_n\left(\frac{\vec{s}}{4n}\right).
\end{equation}
By using the finite-$n$ results obtained via the ladder operator approach, we build direct relationships between the scaled $R_{n,k}, r_{n,k}$ and $\sigma(\vec{s})$. With them and the results presented in Theorem \ref{LO-result}, we establish the second order PDE satisfied by $\sigma(\vec{s})$.
\begin{proposition}\label{scaledRr}
Under the assumption \eqref{eq:scalingt}, the
uxiliary quantities $R_{n,k}(\vec{t})$ and $r_{n,k}(\vec{t})$ are scaled as below:
\begin{align}
-\lim\limits_{n\rightarrow\infty}\frac{r_{n,k}\left(\frac{\vec{s}}{4n}\right)}{n}
=&-4\partial_{s_k}\sigma(\vec{s})\label{m-limRnrnsigma-1}\\
=&\lim\limits_{n\rightarrow\infty}R_{n,k}\left(\frac{\vec{s}}{4n}\right)=:R_k(\vec{s}).\label{m-limRnrnsigma-2}
\end{align}
\end{proposition}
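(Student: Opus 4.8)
The plan is to read both equalities off the finite-$n$ relations of Theorem~\ref{LO-result} after substituting the hard-edge scaling \eqref{eq:scalingt}. I will use two elementary consequences of that scaling: for any smooth $g(\vec t\,)$ one has $\partial_{s_k}\big[g(\vec s/4n)\big]=\frac{1}{4n}(\partial_{t_k}g)(\vec s/4n)$, and, since $\delta=\sum_j t_j\partial_{t_j}=\sum_j s_j\partial_{s_j}$ is scale invariant, $(\delta g)(\vec s/4n)=\delta\big[g(\vec s/4n)\big]$. The first equality \eqref{m-limRnrnsigma-1} then follows from the finite-$n$ identity $\partial_{t_k}\sigma_n=r_{n,k}$, which is precisely the relation that converts \eqref{sigmarR} into the closed PDE \eqref{m-sigmaPDE} (one sees this by matching the factors $\big(n+\alpha+\sum_k\partial_{t_k}\sigma_n\big)\big(n+\sum_k\partial_{t_k}\sigma_n\big)$ there with $\big(n+\alpha+\sum_k r_{n,k}\big)\big(n+\sum_k r_{n,k}\big)$ in \eqref{sigmarR}), and may therefore be harvested from the proof of Theorem~\ref{LO-result}. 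Applying $\partial_{s_k}$ under the scaling and letting $n\to\infty$,
\[
\partial_{s_k}\sigma(\vec s\,)=\lim_{n\to\infty}\frac{1}{4n}\,r_{n,k}(\vec s/4n)=\frac14\lim_{n\to\infty}\frac{r_{n,k}(\vec s/4n)}{n},
\]
which is exactly $-4\partial_{s_k}\sigma=-\lim_n r_{n,k}/n$, i.e. \eqref{m-limRnrnsigma-1}.

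The remaining equality \eqref{m-limRnrnsigma-2} comes from the Riccati relation \eqref{m-Ric1}. Dividing it by $n$ and inserting $t_j=s_j/4n$ gives
\[
\frac{1}{n}\,\delta R_{n,k}=\frac{2r_{n,k}}{n}+\Big(2+\frac{\alpha}{n}-\frac{s_k}{4n^2}+\frac{1}{4n^2}\sum_{j}s_jR_{n,j}\Big)R_{n,k}.
\]
Here every explicit $s_j$-term is $O(1/n^2)$ and the bracket tends to $2$, while the left-hand side equals $\frac1n\,\delta\big[R_{n,k}(\vec s/4n)\big]$ and so tends to $0$ once $R_{n,k}(\vec s/4n)$ converges together with its first $\vec s$-derivatives. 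Passing to the limit yields $0=2\lim_n r_{n,k}/n+2R_k$, that is $\lim_n r_{n,k}/n=-R_k$. Combined with the previous display this gives $-4\partial_{s_k}\sigma=-\lim_n r_{n,k}/n=R_k=\lim_n R_{n,k}$, which is \eqref{m-limRnrnsigma-2}, completing the proof.

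The algebra is light; the substance is analytic. The two limit–derivative interchanges—$\partial_{s_k}\lim=\lim\partial_{s_k}$ in the first step, and the boundedness of $\delta\big[R_{n,k}(\vec s/4n)\big]$ used to make $\frac1n\,\delta R_{n,k}$ vanish in the second—together with the very existence of the limits $R_k=\lim_n R_{n,k}$ and $\lim_n r_{n,k}/n$, are not provided by the ladder-operator identities alone. I expect this to be the main obstacle: it demands uniform, one-derivative control of the scaled orthogonal-polynomial data as $\vec t\to\vec 0^{+}$ at the hard edge, which is exactly where the Bessel-kernel double-scaling analysis enters. Granting that control, the two displays above close the argument.
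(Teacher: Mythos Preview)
Your argument matches the paper's proof: both derive \eqref{m-limRnrnsigma-1} from the finite-$n$ identity $r_{n,k}=\partial_{t_k}\sigma_n$ (the paper's \eqref{rDsigma}) under the change of variables, and then extract \eqref{m-limRnrnsigma-2} from the Riccati relation \eqref{m-Ric1} after scaling. The only minor addition in the paper is that it first invokes the explicit quadratic expression \eqref{Rsigma} for $R_{n,k}$ to argue that $R_{n,k}(\vec s/4n)=O(1)$ before passing to the limit in \eqref{m-Ric1}; beyond that, the paper is just as formal as you are about the existence of the limits and the limit--derivative interchanges you flag.
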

\begin{theorem}\label{m-resultscaled}
Under the assumption \eqref{eq:scalingt}, the function $\sigma(\vec{s})$ specified by \eqref{eq:sigmalimit}
 can be expressed in terms of $R_k(\vec{s})$ and $\delta R_k(\vec{s})$ by
\begin{align}\label{m-sgRk-lim}
4\sigma=\frac{\left(\sum\limits_{k=1}^m \delta R_{k}\right)^2-\al^2}{\sum\limits_{k=1}^m R_{k}-1}-\sum_{k=1}^m \frac{\left(\delta R_k\right)^2}{R_k}-\sum_{k=1}^m s_kR_k-\al^2,
\end{align}
and $\{R_k(\vec{s})\}$ satisfy the following system of PDEs
\begin{equation}\label{m-RPDE-lim}
\begin{aligned}
&2\left(\delta^2 -\al\delta\right) R_k+\left(1-\frac{1}{R_k}\right)\left(\delta R_k-\al R_k\right)^2+R_k\cdot\sum_{j\neq k}\frac{\left(\delta R_j-\al R_j\right)^2}{R_j}\\
&+\frac{R_k}{1-\sum\limits_{j=1}^m{R_j}}\cdot\sum_{j=1}^m \left(\delta R_j-\al R_j \right)\cdot\bigg(\sum_{j=1}^m \left(\delta R_j-\al R_j \right)+2\al\bigg)\\
&+R_k\bigg(s_k-\sum_{j=1}^m s_j R_j\bigg)=0,\qquad\qquad k=1,\cdots,m,
\end{aligned}
\end{equation}
where
\begin{align*}
\delta^2=&\sum_{k=1}^m s_k^2\partial_{s_ks_k}^2+2\sum_{1\leq j<k\leq m} s_js_k\partial_{s_js_k}^2+\delta.
\end{align*}
Moreover, $\sigma(\vec{s})$ satisfies the following PDE
\begin{equation}\label{m-PDEsigma}
\begin{aligned}
\frac{-4\left[ \sum\limits_{k=1}^ms_k\partial_{s_ks_k}^2\sigma+\sum\limits_{1\leq j<k\leq m}(s_j+s_k)\partial_{s_js_k}^2\sigma\right]^2+\frac{\al^2}{4}}{4\sum\limits_{k=1}^m \partial_{s_k}\sigma+1}&\\
+\sum_{k=1}^m \bigg[\frac{1}{\partial_{s_k}\sigma}\bigg(\sum\limits_{j=1}^m s_j\partial_{s_js_k}^2\sigma\bigg)^2\bigg]&
+\delta\sigma-\sigma-\frac{\al^2}{4}=0.
\end{aligned}
\end{equation}
\end{theorem}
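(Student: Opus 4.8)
The plan is to obtain all three assertions by letting $n\to\infty$ in the finite-$n$ identities of Theorem \ref{LO-result}, using the scaling relations of Proposition \ref{scaledRr} as the bridge. Throughout I write $R_k=\lim_{n\to\infty}R_{n,k}(\vec{s}/4n)$ and abbreviate $a_k:=\delta R_k-\al R_k$. The one preliminary I need beyond Proposition \ref{scaledRr} is the \emph{subleading} behaviour of $r_{n,k}$: Proposition \ref{scaledRr} gives only $r_{n,k}/n\to -R_k$, so I extract the next order from the Riccati equation \eqref{m-Ric1}. Setting $q_{n,k}:=r_{n,k}+nR_{n,k}$ and using that under \eqref{eq:scalingt} one has $t_j=s_j/4n\to0$ and $\delta R_{n,k}\to\delta R_k$ (here $\delta$ is scale invariant, hence commutes with the substitution $\vec{t}=\vec{s}/4n$), equation \eqref{m-Ric1} rearranges \emph{exactly} to $2q_{n,k}=\delta R_{n,k}-\al R_{n,k}-(\sum_j t_jR_{n,j}-t_k)R_{n,k}$, whence $\lim_{n\to\infty}2q_{n,k}=\delta R_k-\al R_k=a_k$. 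I deliberately retain the $O(1/n)$ remainder $-(\sum_j t_jR_{n,j}-t_k)R_{n,k}$, because it will matter below.

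For the expression \eqref{m-sgRk-lim} I substitute $r_{n,k}=-nR_{n,k}+q_{n,k}$ into the finite-$n$ formula \eqref{sigmarR}. Writing $P:=\sum_kR_{n,k}$, $S:=1-P$ and $Q:=\sum_kq_{n,k}$, the numerator factors become $n+\sum_k r_{n,k}=nS+Q$ and $n+\al+\sum_k r_{n,k}=nS+Q+\al$, so each of the four terms of \eqref{sigmarR} is a Laurent polynomial in $n$. The plan is then pure bookkeeping: collect powers of $n$ and check that the $O(n^2)$ and $O(n)$ contributions (from the first quotient, from $-\sum_k r_{n,k}^2/R_{n,k}$, and from the explicit $n(n+\al)$) cancel identically, which is exactly what makes the limit in \eqref{eq:sigmalimit} finite, while $\sum_k t_k r_{n,k}\to-\tfrac14\sum_k s_kR_k$ is the only surviving piece of the perturbation term. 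Passing to the limit with $q_k=a_k/2$ and $\sum_k q_k=\tfrac12\big(\delta(\sum_kR_k)-\al\sum_kR_k\big)$ yields \eqref{m-sgRk-lim} after a short consolidation of the $\al^2$ terms.

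For the PDE system \eqref{m-RPDE-lim} I take the scaling limit of \eqref{m-RPDE} in the same spirit. Its left side equals $\delta^2R_{n,k}-2n\,\delta R_{n,k}-\al\,\delta R_{n,k}$ up to $O(1/n)$, and rewriting the right side through $2r_{n,k}$ and the expansion above shows that its $O(n)$ part is exactly $-2n\,\delta R_{n,k}$, so the two divergent pieces cancel. The delicate point — which I expect to be the main obstacle — is that the finite term $R_k(s_k-\sum_j s_jR_j)$ in \eqref{m-RPDE-lim} is \emph{not} produced by the manifestly $O(1/n)$ coefficients on either side; it arises solely when the exact $O(1/n)$ remainder inside $2q_{n,k}$ is multiplied by the $O(n)$ prefactor, namely from $2n(\sum_j t_jR_{n,j}-t_k)R_{n,k}\to\tfrac12(\sum_j s_jR_j-s_k)R_k$. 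Tracking this cross term correctly, together with justifying that the convergence $R_{n,k}\to R_k$ may be differentiated once and twice in $\delta$, is where the genuine work lies; once that is secured, collecting the surviving $O(1)$ terms and using $q_k=a_k/2$ reproduces \eqref{m-RPDE-lim} verbatim.

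Finally, \eqref{m-PDEsigma} follows algebraically from \eqref{m-sgRk-lim} and Proposition \ref{scaledRr}, with no further asymptotics. The relation $R_k=-4\partial_{s_k}\sigma$ gives $\sum_j s_j\partial_{s_js_k}^2\sigma=-\tfrac14\delta R_k$, hence $\sum_k s_k\partial_{s_ks_k}^2\sigma+\sum_{j<k}(s_j+s_k)\partial_{s_js_k}^2\sigma=-\tfrac14\sum_k\delta R_k$, so the squared bracket in the first term of \eqref{m-PDEsigma} becomes $\tfrac1{16}\big(\sum_k\delta R_k\big)^2$ and the denominator $4\sum_k\partial_{s_k}\sigma+1$ becomes $1-\sum_kR_k$. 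Using $\partial_{s_k}\sigma=-R_k/4$ in the second sum, the first two terms of \eqref{m-PDEsigma} reduce, via \eqref{m-sgRk-lim}, to $\sigma+\tfrac14\sum_k s_kR_k+\tfrac{\al^2}{4}$; the entire left side then collapses to $\delta\sigma+\tfrac14\sum_k s_kR_k$, which vanishes because $\delta\sigma=\sum_k s_k\partial_{s_k}\sigma=-\tfrac14\sum_k s_kR_k$. Thus \eqref{m-PDEsigma} is precisely \eqref{m-sgRk-lim} rewritten in terms of $\sigma$ alone.
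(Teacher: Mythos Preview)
Your proposal is correct and follows essentially the same route as the paper: use the Riccati equation \eqref{m-Ric1} to eliminate $r_{n,k}$ from \eqref{sigmarR} and pass to the limit to obtain \eqref{m-sgRk-lim}; take the large-$n$ expansion of \eqref{m-RPDE} to obtain \eqref{m-RPDE-lim}; and substitute $R_k=-4\partial_{s_k}\sigma$ into \eqref{m-sgRk-lim} to get \eqref{m-PDEsigma}. Your introduction of $q_{n,k}=r_{n,k}+nR_{n,k}$ is simply a convenient repackaging of the same substitution, and your explicit tracking of the $O(n^2)$, $O(n)$ cancellations and the cross term that produces $R_k(s_k-\sum_j s_jR_j)$ makes transparent what the paper leaves as ``taking the series expansion of both sides for large $n$''.
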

\begin{remark}
When $m=1$ and with $s_1$ replaced by $s$, expression \eqref{m-sgRk-lim} becomes
\begin{align}\label{Rrsigma}
4\sigma=\frac{\left(sR_1'\right)^2}{R_1(R_1-1)}-\frac{\al^2R_1}{R_1-1}-sR_1.
\end{align}
Equations \eqref{m-RPDE-lim} are reduced to
\begin{align}\label{ODER1}
R_1''=\frac{(2 R_1-1) (R_1')^2}{2 (R_1-1)
   R_1}-\frac{R_1'}{s}-\frac{\alpha ^2 R_1}{2 s^2
   (R_1-1)}+\frac{R_1}{2 s}(R_1-1),
\end{align}
which can be transformed into $P_V\left(0,-\alpha^2/2,1,0\right)$ satisfied by
\[Y(z):=1-1/R_1(s), \qquad\qquad \text{~with~} z=\frac{s}{2}.\]
According to Theorem 3.9 of \cite{Assche}, this $P_V$ can be converted into $P_{III}(-2(\sqrt{\alpha^2}+1),2\sqrt{\alpha^2},1,-1)$ satisfied by
\[\nu(x)=\frac{\sqrt{2z}Y}{z\frac{d}{dz}Y+\sqrt{\alpha^2}Y-\sqrt{\alpha^2}}
=\frac{\sqrt{s}R_1(R_1-1)}{s\frac{d}{ds}R_1-\sqrt{\alpha^2}R_1},\qquad\qquad x^2=2z=s.\]
In this sense, \eqref{m-RPDE-lim} can be viewed as an $m$-variable generalization of $P_{III}$.
To continue, \eqref{m-PDEsigma} now reads
\begin{align}\label{sigmaODEB2=0}
(s\sigma'')^2=&\sigma'(4\sigma'+1)(\sigma-s\sigma')+\alpha^2(\sigma')^2,
\end{align}
which is identified to be the J-M-O $\sigma$ form of $P_{III}$ satisfied by $\sigma(-4\zeta)$ with $\zeta=-s/4$ (see \cite[formula (3.13)]{Jimbo} with the parameters
$\theta_0=\alpha,~\theta_{\infty}=-\alpha$ therein). This result coincides with the one in \cite{BasorChen09}.

In particular, for $m=1$ with $\omega_0=0$ and $\omega_1=1$, we know that $D_n(t_1)$ represents the smallest eigenvalue distribution of LUE up to a constant term. The double scaling analysis of this case was carried out by Tracy and Widom \cite{TW161} where an auxiliary quantity $q(s)$ was introduced.
By combining equation (2.26) and the one below it in \cite{TW161} with the definition of $\sigma(s)$ and \eqref{Rrsigma} in our paper, we find that $\sigma(s), R(s)$ and $q^2(s)$ in \cite{TW161} correspond to our $-\sigma(s),-\sigma(s)/s$ and $R_1(s)$ respectively.
We now check the consistency between their results and our equations \eqref{Rrsigma}--\eqref{sigmaODEB2=0}.
In fact, combining (1.20) of \cite{TW161} with $q(s)=cos \psi(s)$ therein, we obtain an expression that agrees with \eqref{Rrsigma}.
Substituting $R_1(s)=q^2(s)$ into \eqref{ODER1}, we are led to (1.16) of \cite{TW161}. (1.21) of \cite{TW161} is readily seen to be in accord with \eqref{sigmaODEB2=0}. The last thing to remark is that, as was pointed out in \cite{TW161}, the quantity
\[y(x):=\frac{q(s)-1}{q(s)+1},\qquad\qquad \text{~with~} x=\sqrt{s},\]
satisfies $P_V\left(\alpha^2/8,-\alpha^2/8,0,-2\right)$.
\end{remark}

To continue, we let $s_i=\nu_i\tau, \tau>0$. Then \[\vec{t}=(t_1,t_2,\cdots,t_m)=\frac{\vec{s}}{4n}=\frac{\tau}{4n}\vec{\nu}, \qquad\vec{\nu}=(\nu_1, \cdots, \nu_m),\]
and we now have only one scaled variable $\tau$. Supposing $c_{k}=\sum\limits_{j=0}^{k-1} \omega_j/\omega, k=1,\cdots, m$, with $\omega_k$ positive and $\omega=\sum\limits_{k=0}^m\omega_k$ strictly positive, we find
 \begin{align*}
\frac{D_n(\vec{t})}{D_n(\vec{0})}&=\frac{\det\left(\displaystyle{\int_0^{+\infty}\bigg(1-\sum\limits_{k=1}^m(1-c_k)\chi_{(t_{k-1},t_k)}(x)\bigg)x^{i+j}x^{\alpha}{\e}^{-x}dx}\right)_{i,j=0}^{n-1}}{\det\left(\displaystyle{\int}_0^{+\infty}x^{i+j}x^{\alpha}{\e}^{-x}dx\right)_{i,j=0}^{n-1}}\nonumber\\
&=\det\big(I-\chi_{(0,t_m)}\sum_{k=1}^m(1-c_k)K_n\chi_{(t_{k-1},t_{k})}\big),
 \end{align*}
where  $t_0=0$ and
\begin{align*}
K_n(x,y)=(xy)^{\alpha/2}e^{-(x+y)}\sum_{j=0}^{n-1}L_j(x)L_j(y),
\end{align*}
with $L_j(x)$ being the normalized Laguerre polynomial of degree $j$.
It is known that for bounded $x,y\in(0,+\infty)$
 \begin{equation*}
\lim_{n\to\infty} \frac{1}{4n}K_n\left(\frac{x}{4n},\frac{y}{4n}\right)=K_{Bes}(x,y),
\end{equation*}
where $K_{\text{Bes}}(x,y)$ is the Bessel kernel
 \begin{equation*}
K_{\text {Bes}}(x,y)=\frac{J_{\alpha}(\sqrt{x})\sqrt{y}J_{\alpha}'(\sqrt{y})-\sqrt{x}J_{\alpha}'(\sqrt{x})J_{\alpha}(\sqrt{y})}{2(x-y)},
\end{equation*}
with  $J_{\alpha}$ denoting the Bessel function of the first kind of order $\alpha$.
This limit implies
 \begin{equation*}
\lim_{n\to\infty}\frac{D_n(\vec{t})}{D_n(\vec{0})}=\det\bigg(I-\chi_{(0,\nu_m\tau)}\sum_{j=1}^m(1-c_j)K_{\text{Bes}}\chi_{(\nu_{j-1}\tau,\nu_{j}\tau)}\bigg). \end{equation*}
According to the results presented in Theorem \ref{m-resultscaled}, we obtain the integral representation of $\lim\limits_{n\to\infty}\frac{D_n(\vec{t})}{D_n(\vec{0})}$, or equivalently, the determinant of the Bessel kernel,  in terms of $R_k(\vec{s})$.
\begin{corollary}\label{tau-Rq}
For $\vec{t}=\frac{\vec{s}}{4n}=\frac{\tau}{4n}\vec{\nu}$, $\vec{\nu}=(\nu_1, \cdots, \nu_m)$, define
 \begin{align}\label{defRhat}
\hat{R}_k(\tau):=R_k(\vec{s})=\lim\limits_{n\rightarrow\infty}R_{n,k}\left(\vec{t}\right).
\end{align}
We have the following asymptotics
 \begin{equation}\label{eq:LimitD_n}
\lim_{n\to\infty}\frac{D_n(\vec{t})}{D_n(\vec{0})}=\exp\bigg(-\frac{1}{4}\int_0^{\tau} \log\left(\frac{\tau}{\xi}\right) \sum\limits_{k=1}^m\nu_k\hat{R}_k(\xi)d\xi\bigg),
\end{equation}
where 
$\{\hat{R}_k(\tau)\}$ satisfy the following coupled ODEs:
\begin{align*}
&2\tau^2\hat{R}_k''+2(1-\al)\tau \hat{R}_k'+\left(1-\frac{1}{\hat{R}_k}\right)(\tau \hat{R}_k'-\al \hat{R}_k)^2+\hat{R}_k\sum_{j\neq k}\frac{(\tau\hat{R}_j'-\al \hat{R}_j)^2}{\hat{R}_j}\\
&+\frac{\hat{R}_k}{1-\sum\limits_{j=1}^m \hat{R}_j}\cdot\sum_{j=1}^m(\tau \hat{R}_j'-\al \hat{R}_j)\cdot\Bigg(\sum_{j=1}^m(\tau \hat{R}_j'-\hat{R}_j)+2\al \Bigg)+\tau\hat{R}_k(\nu_k-\sum_{j=1}^m \nu_j\hat{R}_j)=0,
\end{align*}
for $k=1,\cdots,m$. Setting
\[\hat{R}_k(\tau)=q_k^2(\tau),\qquad\qquad k=1,\cdots,m,\]
we find that $q_1(\tau),\cdots, q_m(\tau)$ satisfy the differential equations
\begin{equation}\label{eq:EqTW}
\tau q_k (1-\sum_{j=1}^mq_j^2)\sum_{j=1}^m(\tau q_jq_j')'+\tau\Bigg(1-\sum_{j=1}^mq_j^2\Bigg)^2 ((\tau q'_k)'+\frac{a_kq_k}{4})+\tau^2q_k\Bigg(\sum_{j=1}^mq_jq'_j\Bigg)^2-\frac{\alpha^2}{4}q_k=0,
\end{equation}
for $k=1, \cdots, m$.
\end{corollary}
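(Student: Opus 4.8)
The plan is to transport everything in Theorem~\ref{m-resultscaled} and Proposition~\ref{scaledRr} to the single-parameter ray $\vec{s}=\tau\vec{\nu}$. The one structural fact driving the whole argument is that the Euler operator $\delta=\sum_{k=1}^m s_k\partial_{s_k}$ generates the scaling $s_k\mapsto \mathrm{e}^{\lambda}s_k$, which on the ray is $\tau\mapsto\mathrm{e}^{\lambda}\tau$; hence, on functions restricted to the ray, $\delta$ acts as $\tau\frac{d}{d\tau}$. Writing $\hat{R}_k(\tau)=R_k(\tau\vec{\nu})$ as in \eqref{defRhat}, this gives at once $\delta R_k=\tau\hat{R}_k'$ and $\delta^2R_k=\tau^2\hat{R}_k''+\tau\hat{R}_k'$ along the ray (using that the $\delta^2$ of Theorem~\ref{m-resultscaled} is exactly $\delta\circ\delta$). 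With these two identities the coupled ODEs for $\{\hat{R}_k\}$ are immediate: substituting them together with $s_k=\nu_k\tau$ into \eqref{m-RPDE-lim}, the leading block $2(\delta^2-\al\delta)R_k$ becomes $2\tau^2\hat{R}_k''+2(1-\al)\tau\hat{R}_k'$, each factor $\delta R_j-\al R_j$ becomes $\tau\hat{R}_j'-\al\hat{R}_j$, and the inhomogeneous term turns into $\tau\hat{R}_k(\nu_k-\sum_j\nu_j\hat{R}_j)$, which is the stated equation.

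For the integral representation \eqref{eq:LimitD_n} I would argue as follows. Since $D_n(\vec 0)$ is independent of $\vec s$, the definition \eqref{eq:sigmalimit} gives $\sigma(\vec s)=\delta F$ with $F:=\ln\lim_{n\to\infty}D_n(\vec t)/D_n(\vec 0)$; on the ray this reads $\sigma(\tau)=\tau F'(\tau)$. Proposition~\ref{scaledRr}, in the form $\partial_{s_k}\sigma=-\tfrac14 R_k$ coming from \eqref{m-limRnrnsigma-1}--\eqref{m-limRnrnsigma-2}, then yields $\delta\sigma=-\tfrac{\tau}{4}\sum_k\nu_k\hat{R}_k$, i.e. $\sigma'(\tau)=-\tfrac14\sum_k\nu_k\hat{R}_k(\tau)$. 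Integrating once with $\sigma(0)=0$ gives $\sigma(\tau)=-\tfrac14\int_0^\tau\sum_k\nu_k\hat{R}_k(\xi)\,d\xi$, and integrating $F'=\sigma/\tau$ with $F(0)=0$ and swapping the order of integration over $\{0<\xi<u<\tau\}$ turns the inner integral $\int_\xi^\tau u^{-1}\,du$ into $\log(\tau/\xi)$, producing precisely \eqref{eq:LimitD_n}. The identification of the left-hand side with the Bessel-kernel Fredholm determinant is the computation already carried out just before the corollary.

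The substitution $\hat{R}_k=q_k^2$ leading to \eqref{eq:EqTW} is where the real work lies. After substituting one has $\hat{R}_k'=2q_kq_k'$, $\tau\hat{R}_k'-\al\hat{R}_k=q_k(2\tau q_k'-\al q_k)$ and $(\tau\hat{R}_j'-\al\hat{R}_j)^2/\hat{R}_j=(2\tau q_j'-\al q_j)^2$; in particular $\sum_j(\tau\hat{R}_j'-\al\hat{R}_j)=\tau\Sigma'-\al\Sigma$ with $\Sigma:=\sum_j q_j^2$. The key point is that the individual $\hat{R}_k$ equation contains only $q_k''$, whereas \eqref{eq:EqTW} contains the collective combination $\sum_j(\tau q_jq_j')'$, involving all of $\{q_j''\}$; the only mechanism that can supply these is the sum of the full system. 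Concretely, I expect \eqref{eq:EqTW} to arise as the linear combination $\frac{(1-\Sigma)^2}{4q_k}\cdot(\text{equation }k)+\frac{q_k(1-\Sigma)}{4}\cdot\sum_{j=1}^m(\text{equation }j)$, each summand vanishing. The coefficients are forced by matching the $q_k''$ and $q_j''$ ($j\neq k$) coefficients on the two sides, and the same combination makes $a_k=\nu_k$: the inhomogeneous pieces combine so that the $\sum_j\nu_j q_j^2$ contributions cancel and only $\nu_k$ survives.

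The main obstacle is precisely this bookkeeping. One must verify that the summed system simplifies drastically — in particular that $\sum_k\big[(1-\tfrac{1}{\hat{R}_k})(\tau\hat{R}_k'-\al\hat{R}_k)^2+\hat{R}_k\sum_{j\neq k}(\tau\hat{R}_j'-\al\hat{R}_j)^2/\hat{R}_j\big]$ collapses to $(\Sigma-1)\sum_k(2\tau q_k'-\al q_k)^2$ — then assemble the second-order pieces into $(\tau q_k')'$ and $(\tau q_jq_j')'$, clear the single surviving denominator $1-\Sigma$, and confirm that all $\al$-linear contributions cancel so that the only remaining parameter term is $-\tfrac{\al^2}{4}q_k$. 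This cancellation of the $\al$-linear terms is the delicate check; everything upstream is routine once the reduction $\delta=\tau\frac{d}{d\tau}$ is in hand.
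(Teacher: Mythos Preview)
Your proposal is correct and follows essentially the same approach as the paper. The paper's proof is terser: it derives $(\tau\frac{d}{d\tau})^2\ln D_n(\vec t)=\delta\sigma_n=\sum_k t_k r_{n,k}$ at finite $n$ (via \eqref{eq:dH} and \eqref{ruv}), takes the limit using Proposition~\ref{scaledRr}, and then simply observes that \eqref{eq:LimitD_n} solves the resulting second-order ODE; the coupled ODEs for $\hat R_k$ and the derivation of \eqref{eq:EqTW} are not spelled out at all. Your explicit linear-combination coefficients $\tfrac{(1-\Sigma)^2}{4q_k}$ and $\tfrac{q_k(1-\Sigma)}{4}$ for obtaining \eqref{eq:EqTW} are the right ones, so what you call the ``bookkeeping'' is indeed just that.
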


\begin{remark}
Equation \eqref{eq:EqTW} is identical with (1.12) of
\cite{cd} where the asymptotics of $q_1(\tau), \cdots,q_m(\tau)$ as $\tau\to 0$ were provided:
\begin{equation}\label{eq:qasy}q_k(\tau)=\sqrt{c_{k+1}-c_{k}}J_{\alpha}(\sqrt{\nu_k\tau})(1+O(\tau)),\end{equation}
 where $c_{k}=\sum\limits_{j=0}^{k-1} \omega_k/\omega, k=1,\cdots, m$, $c_{m+1}=1$ and $J_{\alpha}$ is the Bessel function of order $\alpha$.
As  pointed out in \cite{cd}, when $m=1$ and $a_1=1$, 
the equation \eqref{eq:EqTW} is reduced to the $P_V$ equation. Corollary \ref{tau-Rq}, together with \eqref{eq:qasy}, generalizes a result of Tracy and Widom  \cite{TW161} for the determinant of the classical Bessel kernel, which corresponds to the case $m = 1$. See also \cite{Charlier,cd} 
for other properties of the Bessel kernel determinants with discontinuities and further applications.
\end{remark}



\subsection{Outline of the paper}
The rest of the paper is organized as follows. In Section \ref{sec:LOP}, we employ the ladder operator approach to express the recurrence coefficients in terms of $2m$ auxiliary quantities which satisfy a system of difference equations that can be iterated. We also show that the recurrence coefficients satisfy Toda equations and finally give the proof of Theorem \ref{LO-result}. In Section \ref{sec:RHP}, by introducing the RH problem for the orthogonal polynomials and with the aid of Lax pair, we establish the coupled $P_V$ system presented in Theorem \ref{cpv},  and obtain the relations and expressions given in Theorem \ref{LO-RHP}. 
Section \ref{sec:SL} is devoted to the double scaling analysis of the Hankel determinant, and the proof of Proposition \ref{scaledRr}, Theorem \ref{m-resultscaled} and Corollary \ref{tau-Rq}.

\section{Finite Dimension Analysis -- Ladder Operator Approach}\label{sec:LOP}
Write
\begin{align*}
w_0(x):=&x^{\al}\e^{-x}=:\e^{-{\rm v}_0(x)},\\
w_J(x;\vec{t}):=&\omega_0+\sum_{k=1}^m \omega_k\theta(x-t_k).
\end{align*}
Then the weight function \eqref{eq:weight} reads
\[
w(x;\vec{t}\,)=x^{\al}\e^{-x}\left(\omega_0+\sum_{k=1}^m \omega_k\theta(x-t_k)\right)=w_0(x)w_J(x;\vec{t}), \quad x\in[0,\infty).
\]

Using the properties of orthogonal polynomials, as was pointed out in Lemma 1 and Remark 1 of \cite{BasorChen09}, one can show that $\{P_n(x;\vec{t}\,)\}$ satisfy a pair of ladder operators. For simplicity of presentation, in what follows, we shall not display the dependence of $\vec{t}$ when not necessary.

\begin{lemma}\label{lad-oper}
 The monic orthogonal polynomials $P_n(z):=P_n(z;\vec{t}\,)$ satisfy the following lowering and raising operator
\begin{align*}
P_n'(z)=&\beta_nA_n(z)P_{n-1}(z)-B_n(z)P_n(z),\\
P_{n-1}'(z)=&\left(B_n(z)+{\rm v}_0'(z)\right)P_{n-1}(z)-A_{n-1}(z)P_n(z),
\end{align*}
where ${\rm v}_0(z)=z-\al\ln z$, and $A_n(z)$ and $B_n(z)$ are given by
\begin{subequations}\label{defAnBn}
\begin{align}
A_n(z):=&\frac{1}{h_n}\int_0^{\infty}\frac{{\rm v}_0'(z)-{\rm v}_0'(y)}{z-y}P_n^2(y)w(y;\vec{t}\,)dy+\sum_{k=1}^m \frac{R_{n,k}(t_k;\vec{t}\,)}{z-t_k},\label{m-defAn}\\
B_n(z):=&\frac{1}{h_{n-1}}\int_0^{\infty}\frac{{\rm v}_0'(z)-{\rm v}_0'(y)}{z-y}P_n(y)P_{n-1}(y)w(y;\vec{t}\,)dy+\sum_{k=1}^m\frac{r_{n,k}(t_k;\vec{t})}{z-t_k}.\label{m-defBn}
\end{align}
\end{subequations}
Here the auxiliary quantities $R_{n,k}(\vec{t})$ and $r_{n,k}(\vec{t}), k=1,\cdots,m$, are defined by
\begin{align*}
R_{n,k}(\vec{t}):=&\omega_k\frac{w_0(t_k)}{h_n(\vec{t})}P_n^2(t_k;\vec{t}),\\
r_{n,k}(\vec{t}):=&\omega_k\frac{w_0(t_k)}{h_{n-1}(\vec{t})}P_n(t_k;\vec{t})P_{n-1}(t_k;\vec{t}).
\end{align*}
\end{lemma}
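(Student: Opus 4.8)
The plan is to follow the classical Chen--Ismail ladder-operator derivation, the one new feature being the jump factor $w_J(x;\vec t)=\omega_0+\sum_{k=1}^m\omega_k\theta(x-t_k)$, whose distributional derivative produces Dirac masses at the points $t_k$ and thereby generates the pole terms in $A_n,B_n$. First I would expand the derivative in the orthogonal basis, writing $P_n'(z)=\sum_{k=0}^{n-1}c_{n,k}P_k(z)$ with $c_{n,k}=h_k^{-1}\int_0^\infty P_n'(y)P_k(y)w(y;\vec t)\,dy$, the sum truncating at $n-1$ since $\deg P_n'=n-1$. Integrating by parts and discarding the endpoint contributions $[P_nP_kw]_0^\infty$ (which vanish thanks to the factors $y^\alpha$ and $\e^{-y}$ in $w_0$), I would use the identity
\[
w'(y;\vec t)=-{\rm v}_0'(y)\,w(y;\vec t)+w_0(y)\sum_{j=1}^m\omega_j\,\delta(y-t_j),
\]
which follows from $w_0'=-{\rm v}_0'w_0$ together with $\theta'=\delta$. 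This splits $c_{n,k}$ into a smooth part $h_k^{-1}\int_0^\infty P_nP_k\,{\rm v}_0'\,w\,dy$ and a jump part $-h_k^{-1}\sum_j\omega_j w_0(t_j)P_n(t_j)P_k(t_j)$.

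Next, in each part I would write ${\rm v}_0'(y)={\rm v}_0'(z)-\big[{\rm v}_0'(z)-{\rm v}_0'(y)\big]$, so that the ${\rm v}_0'(z)$ piece is annihilated by orthogonality for $k<n$, extract the telescoping kernel $\big[{\rm v}_0'(z)-{\rm v}_0'(y)\big]/(z-y)$, and collapse the series $\sum_{k=0}^{n-1}P_k(z)P_k(y)/h_k$ via the Christoffel--Darboux formula recorded in the excerpt. The smooth part then yields $\beta_nA_n^{*}(z)P_{n-1}(z)-B_n^{*}(z)P_n(z)$, where $A_n^{*},B_n^{*}$ denote the integral pieces of $A_n,B_n$ in \eqref{defAnBn} and $\beta_n=h_n/h_{n-1}$. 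Evaluating Christoffel--Darboux at $y=t_j$ converts the jump part into precisely the pole sums $\sum_j R_{n,j}/(z-t_j)$ and $\sum_j r_{n,j}/(z-t_j)$ that complete $A_n$ and $B_n$, once one recognizes $\beta_nR_{n,j}=\omega_jw_0(t_j)P_n^2(t_j)/h_{n-1}$ and $r_{n,j}=\omega_jw_0(t_j)P_n(t_j)P_{n-1}(t_j)/h_{n-1}$ from \eqref{eq:Rn}--\eqref{eq:rn}. Collecting the two contributions reproduces the stated lowering operator.

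For the raising operator I would \emph{not} expand $P_{n-1}'$ directly, because the ${\rm v}_0'(z)$ term renders the right-hand side non-polynomial and the degree bookkeeping fails. Instead I would first establish the supplementary identity ${\rm v}_0'(z)+B_n(z)+B_{n+1}(z)=(z-\alpha_n)A_n(z)$ directly from the definitions \eqref{defAnBn} and the three-term recurrence \eqref{OP-recu}, then apply the already-proven lowering operator at index $n-1$, eliminate $P_{n-2}$ using $\beta_{n-1}P_{n-2}=(z-\alpha_{n-1})P_{n-1}-P_n$, and substitute the index-shifted identity $(z-\alpha_{n-1})A_{n-1}-B_{n-1}=B_n+{\rm v}_0'$. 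This collapses everything to $P_{n-1}'=(B_n+{\rm v}_0')P_{n-1}-A_{n-1}P_n$, which is the raising operator.

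The step I expect to be the main obstacle is the careful handling of the distributional derivative $\theta'(y-t_j)=\delta(y-t_j)$ in tandem with the vanishing of the boundary terms at the origin and at infinity, since the origin term relies on the admissible range of $\alpha$; after that, the bookkeeping of the $\beta_n=h_n/h_{n-1}$ factors that turn the $h_{n-1}$-normalized integrals into the $A_n$-coefficients must be tracked with care. By contrast, once the supplementary identity is in hand, the passage to the raising operator is purely algebraic.
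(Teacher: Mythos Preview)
Your plan is correct and follows the classical Chen--Ismail derivation that the paper simply cites (Lemma~1 and Remark~1 of \cite{BasorChen09}) without reproducing. The paper gives no detailed proof of this lemma, so there is nothing line-by-line to compare; your handling of the distributional derivative $w_J'=\sum_j\omega_j\delta(\cdot-t_j)$ and the Christoffel--Darboux collapse is exactly the mechanism that generates the pole terms in $A_n,B_n$.

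One small point of logical ordering is worth flagging. In the paper, the identity you call ``supplementary'' is the compatibility condition $(S_1)$ of Lemma~\ref{s1s2}, and it is presented \emph{after} both ladder operators, as a consequence of them together with the three-term recurrence. You instead propose to establish $(S_1)$ first, directly from the integral definitions \eqref{defAnBn} and the recurrence, and then use it to deduce the raising operator from the lowering one at index $n-1$. This is perfectly legitimate---indeed, writing $y-\alpha_n=(y-z)+(z-\alpha_n)$ in the integral and using $P_{n+1}+\beta_nP_{n-1}=(y-\alpha_n)P_n$ yields $(S_1)$ with no reference to the raising operator---but it reverses the paper's dependency graph. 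Your route has the virtue of making clear that the raising operator carries no new analytic content once the lowering operator and $(S_1)$ are known; the paper's route keeps the two operators on an equal footing and postpones $(S_1)$ as a genuine compatibility check. Either way the boundary-term issue at the origin for $-1<\alpha\le 0$ that you anticipate is precisely what forces the modified formulas \eqref{defMAnBn} in the paper's Remark following Lemma~\ref{lad-oper}.
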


\begin{remark} When $-1<\alpha\leq0$, the coefficients  $A_n(z)$ and $B_n(z)$ in the ladder operators should be modified to
\begin{subequations}\label{defMAnBn}
\begin{align}
A_n(z)=&\frac{1}{z}\left(\frac{1}{h_n}\int_0^{\infty}\frac{z{\rm v}_0'(z)-y{\rm v}_0'(y)}{z-y}P_n^2(y)w(y;\vec{t}\,)dy-\sum\limits_{k=1}^m R_{n,k}\right)+\sum_{k=1}^m \frac{R_{n,k}}{z-t_k},\label{m-MdefAn}\\
B_n(z)=&\frac{1}{z}\left(\frac{1}{h_{n-1}}\int_0^{\infty}\frac{z{\rm v}_0'(z)-y{\rm v}_0'(y)}{z-y}P_n(y)P_{n-1}(y)w(y;\vec{t}\,)dy-\sum_{k=1}^mr_{n,k}-n\right)+\sum_{k=1}^m\frac{r_{n,k}}{z-t_k}.\label{MdefBn}
\end{align}
\end{subequations}
They can be derived by using the framework presented in Section 4.3 of \cite{Assche} and \cite{ChenIsmail}. It should be pointed out that \eqref{defMAnBn} are also well-defined for $\al>0$ and in this case they are equivalent to \eqref{defAnBn}.
\end{remark}

With the three-term recurrence relation \eqref{OP-recu} and the above ladder operators, we can derive the following two compatibility conditions for $A_n(z)$ and $B_n(z)$.
\begin{lemma}\label{s1s2}
$A_n(z)$ and $B_n(z)$ defined in \eqref{m-defAn}-\eqref{m-defBn} satisfy the following two identities
\begin{align}
B_{n+1}(z)+B_n(z)=&\left(z-\alpha_n\right)A_n(z)-{\rm v}_0'(z),\tag{$S_1$}\\
1+\left(z-\alpha_n\right)\left(B_{n+1}(z)-B_n(z)\right)=&\beta_{n+1}A_{n+1}(z)-\beta_nA_{n-1}(z).\tag{$S_2$}
\end{align}
\end{lemma}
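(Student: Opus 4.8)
The plan is to treat $(S_1)$ and $(S_2)$ as purely algebraic consequences of the two ladder relations in Lemma \ref{lad-oper} together with the three-term recurrence \eqref{OP-recu}. A key observation guiding the whole argument is that the explicit rational form of $A_n(z)$ and $B_n(z)$ in \eqref{defAnBn} never enters; only the operator structure and the linear independence of $P_n$ and $P_{n-1}$ matter, so no integrals, jump terms, or definitions of $R_{n,k}, r_{n,k}$ are needed at this stage.

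For $(S_1)$, I would shift the index in the raising operator from $n$ to $n+1$, which gives $P_n'(z)=\left(B_{n+1}(z)+{\rm v}_0'(z)\right)P_n(z)-A_n(z)P_{n+1}(z)$, and equate this with the lowering operator $P_n'(z)=\beta_n A_n(z)P_{n-1}(z)-B_n(z)P_n(z)$. Eliminating $P_{n+1}(z)$ by means of the recurrence $P_{n+1}(z)=(z-\alpha_n)P_n(z)-\beta_n P_{n-1}(z)$, the terms proportional to $P_{n-1}(z)$ cancel, and matching the coefficient of $P_n(z)$ yields $(S_1)$ at once.

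For $(S_2)$, I would differentiate the recurrence \eqref{OP-recu} to obtain $P_n(z)+zP_n'(z)=P_{n+1}'(z)+\alpha_n P_n'(z)+\beta_n P_{n-1}'(z)$, and then replace each of $P_n'$, $P_{n+1}'$, $P_{n-1}'$ by its ladder-operator expression (the lowering operator for $P_n'$ and $P_{n+1}'$, the raising operator for $P_{n-1}'$). After reducing $P_{n+1}(z)$ once more through the recurrence, every term becomes a linear combination of $P_n(z)$ and $P_{n-1}(z)$. Since consecutive monic orthogonal polynomials are linearly independent, I can equate coefficients separately: the coefficient of $P_{n-1}(z)$ reproduces $(S_1)$ after dividing by $\beta_n\neq0$ (a useful consistency check), while the coefficient of $P_n(z)$ rearranges directly into $(S_2)$.

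The derivation is entirely mechanical, so the only genuine obstacle is bookkeeping: one must substitute the three derivative expressions and perform the single recurrence reduction of $P_{n+1}$ consistently, then collect the $P_n$ and $P_{n-1}$ terms without sign or index errors. Because the argument is formal, it applies verbatim to any $A_n$ and $B_n$ satisfying the ladder relations of Lemma \ref{lad-oper}, which is exactly what makes $(S_1)$ and $(S_2)$ the universal compatibility conditions underlying the ladder operator approach.
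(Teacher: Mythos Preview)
Your proposal is correct and follows exactly the approach the paper indicates: the paper does not spell out a proof but simply states that $(S_1)$ and $(S_2)$ follow from the three-term recurrence \eqref{OP-recu} together with the ladder operators of Lemma~\ref{lad-oper}, and your two-step derivation (equating the two expressions for $P_n'$ to get $(S_1)$, then differentiating the recurrence and collecting $P_n$/$P_{n-1}$ coefficients to get $(S_2)$) is precisely the standard mechanism behind that remark.
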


Multiplying both sides of $(S_2)$ by $A_n(z)$ and eliminating $(z-\al_n)A_n(z)$ in the resulting expression by using $(S_1)$, we find
 \[A_n(z)+B_{n+1}^2(z)-B_n^2(z)+{\rm v}_0'(z)\left(B_{n+1}(z)-B_n(z)\right)=\beta_{n+1}A_{n+1}(z)A_n(z)-\beta_nA_n(z)A_{n-1}(z).\]
Replacing $n$ by $j$ in this equality and summing over $j$ from $0$ to $n-1$, with the initial conditions $B_0(z)=A_{-1}(z)=0$, we obtain another identity for $A_n(z)$ and $B_n(z)$.
\begin{lemma}
We have
\begin{align}
B_n^2(z)+{\rm v}_0'(z)B_n(z)+\sum_{j=0}^{n-1}A_j(z)=&\beta_nA_n(z)A_{n-1}(z).\tag{$S_2'$}
\end{align}
\end{lemma}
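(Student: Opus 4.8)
The plan is to fuse the two compatibility conditions $(S_1)$ and $(S_2)$ of Lemma \ref{s1s2} into a single relation whose every term is a first difference in $n$, and then to telescope by summing over the index. The role of $(S_1)$ is to trade the awkward factor $(z-\alpha_n)$ for a difference of the $B$'s, while $(S_2)$ supplies the relation that, once multiplied by $A_n(z)$, becomes summable.

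Concretely, I would first multiply both sides of $(S_2)$ by $A_n(z)$. On the left this manufactures the factor $(z-\alpha_n)A_n(z)$, which $(S_1)$ identifies with $B_{n+1}(z)+B_n(z)+{\rm v}_0'(z)$; substituting and expanding the product $\bigl(B_{n+1}+B_n+{\rm v}_0'\bigr)\bigl(B_{n+1}-B_n\bigr)$ collapses the cross terms and leaves $A_n(z)+B_{n+1}^2(z)-B_n^2(z)+{\rm v}_0'(z)\bigl(B_{n+1}(z)-B_n(z)\bigr)$. On the right the multiplication turns $\beta_{n+1}A_{n+1}(z)-\beta_n A_{n-1}(z)$ into $\beta_{n+1}A_{n+1}(z)A_n(z)-\beta_n A_n(z)A_{n-1}(z)$. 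This is exactly the intermediate identity displayed just above the statement.

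Every piece of that intermediate identity is now in telescoping form. Writing $C_n(z):=\beta_n A_n(z)A_{n-1}(z)$, the right-hand side is $C_{n+1}(z)-C_n(z)$; on the left $B_{n+1}^2-B_n^2$ and ${\rm v}_0'(B_{n+1}-B_n)$ are first differences, and the solitary $A_n(z)$ accumulates into a genuine sum. I would therefore replace $n$ by $j$, sum over $j=0,1,\dots,n-1$, and let the differences collapse to their endpoints. The boundary contributions vanish by virtue of the initial conditions $B_0(z)=0$ and $A_{-1}(z)=0$: one gets $\sum_{j=0}^{n-1}(B_{j+1}^2-B_j^2)=B_n^2$, $\sum_{j=0}^{n-1}{\rm v}_0'(B_{j+1}-B_j)={\rm v}_0'B_n$, and $\sum_{j=0}^{n-1}(C_{j+1}-C_j)=C_n=\beta_n A_nA_{n-1}$ (the lower endpoint carrying the factor $A_{-1}=0$). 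Collecting the survivors reproduces $(S_2')$.

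Since both the algebraic combination and the summation are mechanical, there is no analytic difficulty here; the only delicate points are bookkeeping. I would double-check that the cross terms in $\bigl(B_{n+1}+B_n\bigr)\bigl(B_{n+1}-B_n\bigr)$ indeed cancel to leave $B_{n+1}^2-B_n^2$, and I would confirm the two initializations: $B_0(z)=0$ follows because the integral and the sum in \eqref{m-defBn} vanish at $n=0$ (as $P_{-1}\equiv0$ in the recurrence \eqref{OP-recu}, so that $r_{0,k}=0$), while $A_{-1}(z)=0$ is the conventional companion, consistent with $\beta_0 P_{-1}\equiv0$. With these verified, the identity is established.
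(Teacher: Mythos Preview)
Your proposal is correct and follows essentially the same route as the paper: multiply $(S_2)$ by $A_n(z)$, use $(S_1)$ to rewrite $(z-\alpha_n)A_n(z)$ as $B_{n+1}(z)+B_n(z)+{\rm v}_0'(z)$, obtain the telescoping identity, and sum from $j=0$ to $n-1$ using $B_0(z)=A_{-1}(z)=0$. The only additions you make are the explicit verification of the initial conditions and the naming $C_n(z):=\beta_nA_n(z)A_{n-1}(z)$, which are helpful but not substantively different.
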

Substituting ${\rm v}_0(z)=z-\al\ln z$ into \eqref{defAnBn}, with the aid of \eqref{m-or} and \eqref{defPn}, we obtain the expressions for $A_n(z)$ and $B_n(z)$ in terms of $R_{n,k}(\vec{t})$ and $r_{n,k}(\vec{t})$.
\begin{lemma}\label{AnBnexp}
 $A_n(z)$ and $B_n(z)$ defined in \eqref{m-defAn}-\eqref{MdefBn} are rational functions in $z$ with simple poles at $0,t_1,\cdots,t_m$, reading
\begin{subequations}\label{m-AnBnRr}
\begin{align}
A_n(z)=&\dfrac{1-\sum\limits_{k=1}^m R_{n,k}}{z}+\sum\limits_{k=1}^m\frac{R_{n,k}}{z-t_k},\label{m-AnR}\\
B_n(z)=&-\frac{n+\sum\limits_{k=1}^m r_{n,k}}{z}+\sum_{k=1}^m\frac{r_{n,k}}{z-t_k}.\label{m-Bnr}
\end{align}
\end{subequations}
\end{lemma}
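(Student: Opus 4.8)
The plan is to exploit the explicit form of the external potential $v_0$ to collapse the integral terms in \eqref{m-defAn} and \eqref{m-defBn} to a single simple pole at the origin; the poles at $t_1,\dots,t_m$ are then read off directly from the sums already present in those definitions. First I would record that $v_0(z)=z-\al\ln z$ gives $v_0'(z)=1-\al/z$, so the divided difference occurring in both integrals is
\[
\frac{v_0'(z)-v_0'(y)}{z-y}=\frac{\al}{yz}.
\]
The crucial feature is that its $z$-dependence is the pure factor $1/z$. Hence the integral part of $A_n(z)$ equals $\frac{\al}{z\,h_n}\int_0^\infty y^{-1}P_n^2(y)w(y)\,dy$, a constant multiple of $1/z$, and likewise for $B_n(z)$. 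This already shows that $A_n$ and $B_n$ are rational with simple poles only at $0,t_1,\dots,t_m$, the residues at the $t_k$ being $R_{n,k}$ and $r_{n,k}$ by construction, so it remains only to evaluate the two negative moments.

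For this I would use the identity $\tfrac{\al}{y}w_0(y)=w_0'(y)+w_0(y)$, which follows from $w_0(y)=y^\al e^{-y}$, to convert the $y^{-1}$ factor into a derivative. Writing $w=w_0w_J$ with $w_J$ piecewise constant and integrating $\int_0^\infty P_n^2 w_0' w_J\,dy$ by parts, the boundary terms at $0$ and $\infty$ vanish when $\al>0$, the term $\int_0^\infty P_nP_n'\,w\,dy$ is killed by orthogonality \eqref{m-or}, and the jumps of $w_J$ at the $t_k$ contribute $-\sum_k\omega_k w_0(t_k)P_n^2(t_k)=-h_n\sum_k R_{n,k}$. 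This yields $\frac{\al}{h_n}\int_0^\infty y^{-1}P_n^2 w\,dy=1-\sum_k R_{n,k}$ and hence \eqref{m-AnR}. The computation for $B_n$ is parallel: integration by parts produces the jump term $-\sum_k r_{n,k}$, while the orthogonality contribution is now $\int_0^\infty P_n'P_{n-1}w\,dy=n\,h_{n-1}$ (using the monic normalization \eqref{defPn}, so that $P_n'=nP_{n-1}+\text{lower order}$) together with $\int_0^\infty P_nP_{n-1}'w\,dy=0$; this gives $\frac{\al}{h_{n-1}}\int_0^\infty y^{-1}P_nP_{n-1}w\,dy=-\big(n+\sum_k r_{n,k}\big)$ and hence \eqref{m-Bnr}.

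A cleaner alternative, valid for all $\al>-1$, proceeds from the modified coefficients \eqref{defMAnBn}: there the relevant combination is $\frac{z v_0'(z)-y v_0'(y)}{z-y}=1$ identically, since $z v_0'(z)=z-\al$, so the integral in \eqref{m-MdefAn} reduces to $\frac{1}{h_n}\int_0^\infty P_n^2 w\,dy=1$ by \eqref{m-or} and that in \eqref{MdefBn} to $\frac{1}{h_{n-1}}\int_0^\infty P_nP_{n-1}w\,dy=0$ by orthogonality; substituting these reproduces \eqref{m-AnBnRr} with no integration by parts at all. I expect the only genuine obstacle to be the careful bookkeeping in the $\al>0$ route: the contributions of the discontinuities of $w_J$ at each $t_k$ must be tracked (most safely by splitting $\int_0^\infty$ at the $t_k$ rather than differentiating $w_J$ as a distribution), and the vanishing of the boundary term at $y=0$ must be justified, which is precisely where $\al>0$ enters and why \eqref{defMAnBn} is needed for $-1<\al\le 0$.
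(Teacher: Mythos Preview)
Your proposal is correct and follows essentially the same approach as the paper. Both compute $\frac{v_0'(z)-v_0'(y)}{z-y}=\frac{\al}{yz}$, handle the case $\al>0$ by an integration by parts that exposes the boundary terms, the orthogonality cancellation, and the jump contributions $\sum_k R_{n,k}$ (respectively $\sum_k r_{n,k}$), and treat $-1<\al\le 0$ via the modified definitions \eqref{defMAnBn} where $\frac{zv_0'(z)-yv_0'(y)}{z-y}=1$ reduces the integrals to pure orthogonality; the only cosmetic difference is that the paper writes the integration by parts against $dy^{\al}$ whereas you use the equivalent identity $\frac{\al}{y}w_0=w_0'+w_0$.
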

\begin{proof}
We first compute $A_n(z)$ and $B_n(z)$ for the case $\al>0$.
Since
\[\frac{{\rm v}_0'(z)-{\rm v}_0'(y)}{z-y}=\frac{\al}{yz},\]
the integral in \eqref{m-defAn} is given by
\begin{align*}
\frac{1}{h_n}\int_0^{\infty}\frac{{\rm v}_0'(z)-{\rm v}_0'(y)}{z-y}P_n^2(y)w(y;\vec{t})dy
=\frac{1}{z}\cdot\frac{1}{h_n}\int_0^{\infty}P_n^2(y)\e^{-y}w_J(x;\vec{t})dy^{\al}.
\end{align*}
Through integration by parts and noting that $\left(d/dx\right)\theta(x-t)=\delta(x-t)$ with $\delta(\cdot)$ denoting the Dirac delta function, we find
\begin{align}
&\frac{1}{h_n}\int_0^{\infty}\frac{{\rm v}_0'(z)-{\rm v}_0'(y)}{z-y}P_n^2(y)w(y;\vec{t})dy\nonumber\\
=&\frac{1}{z}\cdot\frac{1}{h_n}\left[P_n^2(y)w(y;\vec{t})\Big|_{y=0}^{y=\infty}-\int_0^{\infty}2P_n(y)P_n'(y)w(y;\vec{t})dy\right.\nonumber\\
&\left.-\int_0^{\infty}P_n^2(y)y^{\al}(-\e^{-y})w_J(x;\vec{t})dy-\sum_{k=1}^m w_k\int_0^{\infty}P_n^2(y)y^{\al}\e^{-y} \delta(y-t_k)dy\right].\label{m-An1}
\end{align}
Now we look at the four terms in the square bracket. Noting that $w(0;\vec{t})=w(\infty;\vec{t})=0$, the first term is equal to zero. Since
\begin{align*}
P_n'(y)=ny^{n-1}+\cdots=nP_{n-1}(y)+\sum_{k=0}^{n-2}d_kP_{k}(y),
\end{align*}
according to the orthogonality \eqref{m-or}, we find that the second term is also zero and the third one is $h_n$. Due to the following property of the Dirac delta function
\[\int_{0}^{\infty}f(x)\delta(x-t)dx=f(t),\]
the last term is seen to be $h_n\sum\limits_{k=1}^m R_{n,k}$.
Hence we conclude that $\eqref{m-An1}$ equals $(1-\sum\limits_{k=1}^m R_{n,k})/z$. Plugging it back into \eqref{m-defAn}, we get \eqref{m-AnR}.
The integral in \eqref{m-defBn} can be computed similarly and thus we omit the derivation of \eqref{m-Bnr}.

For $-1<\al\leq0$, we make use of \eqref{defMAnBn} to calculate $A_n(z)$ and $B_n(z)$, which gives us \eqref{m-AnBnRr} immediately.
\end{proof}

Inserting \eqref{m-AnBnRr} into the three identities satisfied by $A_n(z)$ and $B_n(z)$, namely, $(S_1), (S_2), (S_2')$,  by equating the residues on their both sides, we arrive at a series of difference equations.

\subsection{Difference equations for auxiliary quantities}
From $(S_1)$, we get
\begin{align}
z^{-1}:\qquad\qquad2n+1+\sum_{k=1}^m\left(r_{n+1,k}+r_{n,k}\right)=&\al_n\left(1-\sum_{k=1}^mR_{n,k}\right)-\al,\label{m-S11}\\
(z-t_k)^{-1}:\qquad\qquad\qquad\qquad\qquad r_{n+1,k}+r_{n,k}=&(t_k-\al_n)R_{n,k}, \qquad k=1,\cdots,m.\qquad\label{m-S12}
\end{align}
From $(S_2)$, we find
\begin{align}
\quad z^{-1}:\qquad \al_n\left(1+\sum_{k=1}^m\left(r_{n+1,k}-r_{n,k}\right)\right)=&\bt_{n+1}\left(1-\sum_{k=1}^mR_{n+1,k}\right)-\beta_n\left(1-\sum_{k=1}^mR_{n-1,k}\right),\label{S21}\\
(z-t_k)^{-1}:\qquad\left(t_k-\al_n\right)\left(r_{n+1,k}-r_{n,k}\right)=&\beta_{n+1}R_{n+1,k}-\beta_nR_{n-1,k},\qquad k=1,\cdots,m.\label{S22}
\end{align}
From $(S_2')$, we obtain
\begin{gather}\label{m-S2'2}
\quad z^{-2}:\quad\qquad\Bigg(n+\al+\sum_{k=1}^m r_{n,k}\Bigg)\left(n+\sum_{k=1}^m r_{n,k}\right)=\bt_n\left(1-\sum_{k=1}^m R_{n,k}\right)\left(1-\sum_{k=1}^m R_{n-1,k}\right),
\end{gather}
\begin{gather}\label{m-S2'1}
(z-t_k)^{-2}:\qquad\qquad\qquad\qquad r_{n,k}^2=\bt_nR_{n,k}R_{n-1,k},\qquad k=1,\cdots,m.\qquad\qquad\qquad\quad
\end{gather}

Using \eqref{m-S11}, \eqref{m-S12}, \eqref{m-S2'2} and \eqref{m-S2'1}, we can express $\al_n(\vec{t})$ and $\bt_n(\vec{t})$ in terms of $\{R_{n,k}(\vec{t})\}$ and $\{r_{n,k}(\vec{t})\}$ which are shown to satisfy a system of difference equations.
\begin{lemma}
The recurrence coefficients are expressed in terms of the auxiliary quantities by
\begin{align}
\al_n(\vec{t})=&\sum_{k=1}^m t_k R_{n,k}+2n+1+\al,\label{m-alR}\\
\bt_n(\vec{t})=&\frac{\left(n+\al+\sum\limits_{k=1}^m r_{n,k}\right)\left(n+\sum\limits_{k=1}^m r_{n,k}\right)}{1-\sum\limits_{k=1}^m R_{n,k}}+\sum\limits_{k=1}^m\frac{r_{n,k}^2}{R_{n,k}}.\label{btRr}
\end{align}
\end{lemma}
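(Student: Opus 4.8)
The plan is to extract both formulas purely algebraically from the four residue identities \eqref{m-S11}, \eqref{m-S12}, \eqref{m-S2'2} and \eqref{m-S2'1} that were just obtained, with no further appeal to the orthogonal polynomials themselves.

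I would first treat $\al_n$. Summing \eqref{m-S12} over $k=1,\dots,m$ gives $\sum_{k=1}^m(r_{n+1,k}+r_{n,k})=\sum_{k=1}^m t_kR_{n,k}-\al_n\sum_{k=1}^m R_{n,k}$. Substituting this expression into the left-hand side of \eqref{m-S11} yields $2n+1+\sum_{k=1}^m t_kR_{n,k}-\al_n\sum_{k=1}^m R_{n,k}=\al_n\bigl(1-\sum_{k=1}^m R_{n,k}\bigr)-\al$. The decisive feature is that the two copies of $\al_n\sum_{k=1}^m R_{n,k}$ cancel, leaving the linear relation $2n+1+\sum_{k=1}^m t_kR_{n,k}=\al_n-\al$, which is exactly \eqref{m-alR} after rearrangement.

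For $\bt_n$ the only genuine manipulation is eliminating the shifted-index quantities $R_{n-1,k}$. From \eqref{m-S2'1} I would write $R_{n-1,k}=r_{n,k}^2/(\bt_nR_{n,k})$ and sum over $k$ to obtain $\sum_{k=1}^m R_{n-1,k}=\bt_n^{-1}\sum_{k=1}^m r_{n,k}^2/R_{n,k}$, hence $1-\sum_{k=1}^m R_{n-1,k}=1-\bt_n^{-1}\sum_{k=1}^m r_{n,k}^2/R_{n,k}$. Inserting this into the factor $\bigl(1-\sum_{k=1}^m R_{n-1,k}\bigr)$ of \eqref{m-S2'2}, the prefactor $\bt_n$ multiplies through and cancels the $\bt_n^{-1}$, giving $\bigl(n+\al+\sum_{k=1}^m r_{n,k}\bigr)\bigl(n+\sum_{k=1}^m r_{n,k}\bigr)=\bigl(1-\sum_{k=1}^m R_{n,k}\bigr)\bigl(\bt_n-\sum_{k=1}^m r_{n,k}^2/R_{n,k}\bigr)$. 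Dividing by $1-\sum_{k=1}^m R_{n,k}$ and solving for $\bt_n$ produces \eqref{btRr}.

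There is no analytic obstacle; the computation is elementary. The one point demanding care is the self-referential elimination in the $\bt_n$ step, where $\sum_{k=1}^m R_{n-1,k}$ is itself written through $\bt_n$: it is precisely the exact cancellation of the $\bt_n$ factors that collapses \eqref{m-S2'2} into a closed expression for $\bt_n$ involving only current-index data $\{R_{n,k},r_{n,k}\}$, free of any residual dependence on the index $n-1$. Implicitly the derivation also presupposes that the denominators $1-\sum_{k=1}^m R_{n,k}$ and $R_{n,k}$ are nonzero, which I would flag as the well-definedness condition under which \eqref{btRr} is to be read.
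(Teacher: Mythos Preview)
Your proof is correct and follows exactly the same route as the paper: sum \eqref{m-S12} over $k$ and combine with \eqref{m-S11} to get \eqref{m-alR}, then use \eqref{m-S2'1} to eliminate $R_{n-1,k}$ from \eqref{m-S2'2} and solve for $\bt_n$ to get \eqref{btRr}. The paper states this tersely; your write-up simply fills in the algebraic details, including the cancellation of $\al_n\sum_k R_{n,k}$ and the collapse of the $\bt_n$ factors.
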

\begin{proof}
Summing \eqref{m-S12} over $k=1,\cdots,m$, and combining the resulting equality with \eqref{m-S11}, we are led to \eqref{m-alR}.
Eliminating $R_{n-1,k}$ in \eqref{m-S2'2} by using \eqref{m-S2'1}, after simplification, we obtain \eqref{btRr}.
\end{proof}

\begin{lemma}
The quantities $\{R_{n,k}(\vec{t})\}$ and $\{r_{n,k}(\vec{t})\}$ satisfy the following system of difference equations
\begin{align}
r_{n+1,k}=&\Bigg(t_k-\sum_{j=1}^m t_j R_{n,j}-2n-1-\al\Bigg)R_{n,k}-r_{n,k},\qquad k=1,\cdots,m,\label{diff12}\\
\frac{1}{R_{n,1}}=&1+\frac{R_{n-1,1}}{r_{n,1}^2}\cdot\left[\sum_{k=2}^m\frac{r_{n,k}^2}{R_{n-1,k}}
+\frac{\left(n+\al+\sum\limits_{k=1}^m r_{n,k}\right)\left(n+\sum\limits_{k=1}^m r_{n,k}\right)}{1-\sum\limits_{k=1}^m R_{n-1,k}}\right],\label{diff3}\\
R_{n,k}=&\frac{R_{n-1,1}}{R_{n-1,k}}\cdot\frac{r_{n,k}^2}{r_{n,1}^2}\cdot R_{n,1} ,\qquad k=2,\cdots,m,\label{diff4}
\end{align}
 which can be iterated in $n$ with initial conditions
\begin{align*}
R_{0,k}=\frac{\omega_k t_k^{\al}\e^{-t_k}}{\int_0^\infty w(x;\vec{t})dx}, \qquad\qquad r_{0,k}=0.
\end{align*}
\end{lemma}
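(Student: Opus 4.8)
The plan is to obtain all three difference equations, together with the initial data, purely by substituting the closed forms \eqref{m-alR} and \eqref{btRr} for $\al_n$ and $\bt_n$ into the residue identities already extracted from $(S_1)$ and $(S_2')$. It is worth noting in advance that the residues \eqref{S21}--\eqref{S22} coming from $(S_2)$ play no role here; only \eqref{m-S12}, \eqref{m-S2'1} and \eqref{m-S2'2} are needed. Throughout I abbreviate $S_n:=\sum_{k=1}^m R_{n,k}$ and $\rho_n:=\sum_{k=1}^m r_{n,k}$.

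Equation \eqref{diff12} is immediate: in \eqref{m-S12} one replaces $\al_n$ by the right-hand side of \eqref{m-alR} and solves for $r_{n+1,k}$, which reproduces \eqref{diff12} verbatim. Equation \eqref{diff4} is almost as direct. Writing \eqref{m-S2'1} once with index $k$ and once with index $1$ gives $r_{n,k}^2=\bt_nR_{n,k}R_{n-1,k}$ and $r_{n,1}^2=\bt_nR_{n,1}R_{n-1,1}$; forming the quotient eliminates the common factor $\bt_n$, and solving for $R_{n,k}$ yields \eqref{diff4}. No explicit value of $\bt_n$ is required for this step.

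The main step is \eqref{diff3}, and the plan there is to show it is merely a repackaging of \eqref{m-S2'2}. From \eqref{m-S2'1} I read off, on the one hand, $\bt_n=r_{n,1}^2/(R_{n,1}R_{n-1,1})$, and on the other hand $r_{n,k}^2/R_{n-1,k}=\bt_nR_{n,k}$ for every $k$ (including $k=1$). Multiplying the claimed identity \eqref{diff3} through by $r_{n,1}^2/R_{n-1,1}$, the left-hand side collapses to $\bt_n$, while on the right the factor $r_{n,1}^2/R_{n-1,1}=\bt_nR_{n,1}$ combines with the sum $\sum_{k=2}^m r_{n,k}^2/R_{n-1,k}=\bt_n\sum_{k=2}^mR_{n,k}$ to give $\bt_nS_n$, leaving the last term untouched. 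The identity to be proved therefore reduces to $\bt_n(1-S_n)=(n+\al+\rho_n)(n+\rho_n)/(1-S_{n-1})$, which upon clearing $1-S_{n-1}$ is exactly \eqref{m-S2'2}. Since \eqref{m-S2'2} holds, \eqref{diff3} follows.

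The initial conditions come from evaluating \eqref{eq:Rn} and \eqref{eq:rn} at $n=0$, where $P_0\equiv1$ and $P_{-1}\equiv0$ give $R_{0,k}=\omega_k w_0(t_k)/h_0$ with $h_0=\int_0^\infty w(x;\vec{t}\,)\,dx$, and $r_{0,k}=0$. Finally, to justify that the system can be iterated in $n$, I would record the order in which the unknowns at level $n$ are produced from level $n-1$: \eqref{diff12} with $n$ shifted to $n-1$ expresses each $r_{n,k}$ through level-$(n-1)$ data, then \eqref{diff3} delivers $R_{n,1}$, and finally \eqref{diff4} delivers $R_{n,k}$ for $k=2,\dots,m$. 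I expect \eqref{diff3} to be the only genuinely delicate point, since it is the one relation whose verification requires combining two residue identities and recognizing the result as \eqref{m-S2'2} rather than a single substitution.
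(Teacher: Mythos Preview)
Your proof is correct and follows essentially the same route as the paper: \eqref{diff12} by inserting \eqref{m-alR} into \eqref{m-S12}, \eqref{diff4} by taking the ratio of two instances of \eqref{m-S2'1}, and \eqref{diff3} by combining $\bt_n=r_{n,1}^2/(R_{n,1}R_{n-1,1})$ with the information in \eqref{m-S2'2}. The only cosmetic difference is that the paper phrases the last step as ``combining with \eqref{btRr}'' (which was itself obtained from \eqref{m-S2'2} and \eqref{m-S2'1}), whereas you reduce \eqref{diff3} directly to \eqref{m-S2'2}; the content is the same, and your version is arguably cleaner since it avoids the apparent mismatch between the $R_{n,k}$ in \eqref{btRr} and the $R_{n-1,k}$ in \eqref{diff3}.
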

\begin{proof}
Plugging \eqref{m-alR} into \eqref{m-S12} leads to \eqref{diff12} immediately. With $k=1$ in \eqref{m-S2'1}, we have
\begin{align}\label{btrn1}
\bt_n=\frac{r_{n,1}^2}{R_{n,1}R_{n-1,1}}.
\end{align}
Substituting it into \eqref{m-S2'1} with $k=2,\cdots,m$, we come to \eqref{diff4}.
Combining \eqref{btrn1} with \eqref{btRr}, we get \eqref{diff3}.
\end{proof}

To conclude this subsection, we present expressions for $p(n,\vec{t})$, the second leading coefficient of degree-$n$ monic orthogonal polynomial $P_n(x;\vec{t})$, involving $\bt_n$ and the auxiliary quantities.
We will see that these relations play an important role in the study of the log derivative of the Hankel determinant.

\begin{lemma} We have
\begin{align}
p(n,\vec{t})=&\sum_{k=1}^m t_k r_{n,k}-\bt_n\label{m-btp}\\
=&-\sum_{k=1}^mt_k\sum_{j=0}^{n-1}R_{j,k}-n(n+\al).\label{m-psumR}
\end{align}
\end{lemma}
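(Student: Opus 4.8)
The plan is to establish the two expressions for $p(n,\vec{t})$, the subleading coefficient of $P_n(x;\vec{t})$, namely \eqref{m-btp} and \eqref{m-psumR}. The first formula \eqref{m-btp} is a ``one-step'' relation connecting $p(n,\vec{t})$ to $\beta_n$ and the auxiliary quantities $\{r_{n,k}\}$, while the second \eqref{m-psumR} is a telescoped/summed version expressing $p(n,\vec{t})$ directly in terms of $\{R_{j,k}\}$ for $j=0,\dots,n-1$. I would prove \eqref{m-btp} first and then derive \eqref{m-psumR} from it by summation.

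\textbf{Step 1: Deriving \eqref{m-btp}.} The key input is the $z^{-1}$ residue identity from $(S_2)$, namely \eqref{S21}, together with the relation \eqref{m-alp} which reads $\alpha_n=p(n,\vec{t})-p(n+1,\vec{t})$. The idea is that equation \eqref{S21} involves $\alpha_n\big(1+\sum_k(r_{n+1,k}-r_{n,k})\big)$ on the left and a difference $\beta_{n+1}(1-\sum_k R_{n+1,k})-\beta_n(1-\sum_k R_{n-1,k})$ on the right. First I would use \eqref{m-S2'2}, which gives $\beta_n(1-\sum_k R_{n,k})(1-\sum_k R_{n-1,k})=(n+\alpha+\sum_k r_{n,k})(n+\sum_k r_{n,k})$, to rewrite those products; one must be careful that \eqref{S21} mixes $R_{n+1,k}$ and $R_{n-1,k}$ indices, so the cleanest route is to recognize the right-hand side of \eqref{S21} as a difference that telescopes. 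Alternatively, and perhaps more directly, I would exploit the definition $p(n,\vec{t})=\sum_{k}t_k r_{n,k}-\beta_n$ as an ansatz and verify it is consistent with the recurrence $\alpha_n=p(n)-p(n+1)$ by substituting \eqref{m-alR}, \eqref{m-S12}, and the $\beta$-difference from \eqref{S21}. Concretely, from the ansatz $\alpha_n=p(n)-p(n+1)=\sum_k t_k(r_{n,k}-r_{n+1,k})-(\beta_n-\beta_{n+1})$, and using \eqref{m-S12} in the form $r_{n+1,k}+r_{n,k}=(t_k-\alpha_n)R_{n,k}$ to handle the $\sum_k t_k(r_{n,k}-r_{n+1,k})$ term, one reduces everything to \eqref{S21}, which holds by Lemma~\ref{s1s2}. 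This closes the induction once the base case $n=0$ is checked: there $p(0,\vec{t})=0$ by \eqref{defPn}, $r_{0,k}=0$, and $\beta_0=0$, so both sides vanish.

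\textbf{Step 2: Deriving \eqref{m-psumR} by summation.} Starting from \eqref{m-btp}, I would substitute the expression \eqref{btRr} for $\beta_n$, but this looks unwieldy because of the quadratic $r_{n,k}^2/R_{n,k}$ terms. The cleaner approach is to telescope $\alpha_n=p(n)-p(n+1)$ directly. Summing \eqref{m-alR}, which states $\alpha_n=\sum_k t_k R_{n,k}+2n+1+\alpha$, over $j=0,\dots,n-1$ gives
\begin{align*}
p(0,\vec{t})-p(n,\vec{t})=\sum_{j=0}^{n-1}\alpha_j=\sum_{j=0}^{n-1}\sum_{k=1}^m t_k R_{j,k}+\sum_{j=0}^{n-1}(2j+1+\alpha).
\end{align*}
Since $p(0,\vec{t})=0$ and $\sum_{j=0}^{n-1}(2j+1+\alpha)=n^2+n\alpha=n(n+\alpha)$, this rearranges immediately to $p(n,\vec{t})=-\sum_k t_k\sum_{j=0}^{n-1}R_{j,k}-n(n+\alpha)$, which is exactly \eqref{m-psumR}. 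This derivation is self-contained and does not even require \eqref{m-btp}, so in fact the two formulas can be proved independently, with \eqref{m-btp} coming from the compatibility identities and \eqref{m-psumR} from telescoping \eqref{m-alR}.

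\textbf{Main obstacle.} The routine but error-prone part will be Step 1: correctly manipulating \eqref{S21} so that the index shifts in $R_{n+1,k}$ versus $R_{n-1,k}$ cooperate, and ensuring the $\beta_{n+1}-\beta_n$ telescoping matches the $r$-sum telescoping produced by \eqref{m-S12}. The cross-terms $\sum_k t_k r_{n+1,k}$ must be eliminated cleanly using \eqref{m-S12}, and one must track signs carefully. Step 2, by contrast, is essentially immediate once one notices that summing the explicit formula \eqref{m-alR} gives the arithmetic series $\sum(2j+1+\alpha)=n(n+\alpha)$; the consistency of the two formulas \eqref{m-btp} and \eqref{m-psumR} then serves as a useful internal check, since equating them recovers precisely \eqref{btRr} summed against the difference structure.
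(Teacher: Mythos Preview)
Your Step~2 is correct and is exactly the paper's argument for \eqref{m-psumR}: one telescopes $\sum_{j=0}^{n-1}\alpha_j=-p(n,\vec{t})$ using the explicit formula \eqref{m-alR} and the arithmetic sum $\sum_{j=0}^{n-1}(2j+1+\alpha)=n(n+\alpha)$.

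Your Step~1, however, has a concrete gap. The inductive approach is fine in spirit---indeed the paper also proves \eqref{m-btp} by telescoping, reducing everything to the single-step identity
\[
\alpha_n+\sum_{k=1}^m t_k\bigl(r_{n+1,k}-r_{n,k}\bigr)=\beta_{n+1}-\beta_n
\]
and then summing over $j=0,\dots,n-1$ with the initial data $r_{0,k}=\beta_0=0$. The issue is the tool you propose for establishing this identity. You invoke \eqref{m-S12}, but \eqref{m-S12} reads $r_{n+1,k}+r_{n,k}=(t_k-\alpha_n)R_{n,k}$: it controls the \emph{sum} $r_{n+1,k}+r_{n,k}$, not the \emph{difference} $r_{n+1,k}-r_{n,k}$ that appears in your target. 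There is no way to extract the difference from the sum alone, so the reduction ``to \eqref{S21}'' does not go through as you describe.

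The correct companion to \eqref{S21} is \eqref{S22}, the $(z-t_k)^{-1}$ residue of $(S_2)$, which gives $(t_k-\alpha_n)(r_{n+1,k}-r_{n,k})=\beta_{n+1}R_{n+1,k}-\beta_nR_{n-1,k}$. Summing \eqref{S22} over $k$ and adding the result to \eqref{S21} produces exactly the cancellation you were hoping for: the $\alpha_n\sum_k(r_{n+1,k}-r_{n,k})$ terms cancel, and on the right the $R_{n+1,k}$ and $R_{n-1,k}$ contributions collapse to leave $\beta_{n+1}-\beta_n$. This is precisely the paper's route. Your ``Main obstacle'' paragraph already senses that the $R_{n+1,k}$ versus $R_{n-1,k}$ mismatch in \eqref{S21} is the crux; the missing observation is that \eqref{S22}, not \eqref{m-S12}, supplies the compensating terms.
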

\begin{proof}
From \eqref{m-alp}, it follows that
\begin{align}\label{m-sumalp}
\sum_{j=0}^{n-1}\alpha_j(\vec{t})=-p(n,\vec{t}).
\end{align}
This motivates us to derive \eqref{m-btp} and \eqref{m-psumR} using the expressions involving $\al_n$.

Summing \eqref{S22} over $k=1,\cdots,m$, and adding the resulting equality to \eqref{S21}, we get
\begin{align}\label{aldiff}
\al_n+\sum_{k=1}^m t_k(r_{n+1,k}-r_{n,k})=\beta_{n+1}-\beta_n.
\end{align}
Replacing $n$ by $j$ and summing it from $j=0$ to $n-1$, noting that $r_{0,k}=0=\bt_0$ and in view of \eqref{m-sumalp}, we
obtain \eqref{m-btp}.
Replacing $n$ by $j$ in \eqref{m-alR} and summing it over $j$ from $0$ to $n-1$, we are led to \eqref{m-psumR}.
\end{proof}


\subsection{Toda equations for recurrence coefficients}
We proceed to develop differential relations by differentiating the orthogonality relation \eqref{m-or} with $m=n$ and $m=n-1$.
It turns out that the derivatives of $\ln h_n(\vec{t})$ and $p(n,\vec{t})$ are directly related to the auxiliary quantities.

\begin{lemma} We have
\begin{align}
\partial_{t_k}\ln h_n(\vec{t})=&-R_{n,k}(\vec{t}),\label{DhR}\\
\partial_{t_k}p(n,\vec{t})=&r_{n,k}(\vec{t}),\label{Dpr}
\end{align}
which, according to $\bt_n=h_n/h_{n-1}$ and $\al_n=p(n,\vec{t})-p(n+1,\vec{t})$, gives us
\begin{align}
\partial_{t_k}\ln \beta_n(\vec{t})=&R_{n-1,k}(\vec{t})-R_{n,k}(\vec{t}),\label{DbtR}\\
\partial_{t_k}\al_n(\vec{t})=&r_{n,k}(\vec{t})-r_{n+1,k}(\vec{t}).\label{Dalr}
\end{align}
\end{lemma}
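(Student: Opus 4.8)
The plan is to differentiate the two normalizing data that define $h_n$ and $p(n,\vec t\,)$ with respect to $t_k$, exploiting the fact that the only $t_k$-dependence of the weight sits in the single step factor $\theta(x-t_k)$. Concretely, isolating that dependence in \eqref{eq:weight} gives $\omega_k\int_0^\infty(\cdots)w_0(x)\theta(x-t_k)\,dx=\omega_k\int_{t_k}^\infty(\cdots)w_0(x)\,dx$, so applying $\partial_{t_k}$ to any integral $\int_0^\infty f(x)w(x;\vec t\,)\,dx$ and using the Leibniz rule on the variable endpoint produces, in addition to the interior derivatives of $f$, a boundary contribution $-\omega_k w_0(t_k)f(t_k)$. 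This is the elementary, distribution-free way to encode $\partial_{t_k}\theta(x-t_k)=-\delta(x-t_k)$, and these endpoint terms are precisely what generate $R_{n,k}$ and $r_{n,k}$.

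First I would prove \eqref{DhR}. Writing $h_n=\int_0^\infty P_n^2\,w\,dx$ and applying $\partial_{t_k}$, the interior term coming from $\partial_{t_k}P_n^2=2P_n\,\partial_{t_k}P_n$ vanishes: since $P_n$ is monic its leading coefficient is constant in $\vec t$, so $\partial_{t_k}P_n$ has degree at most $n-1$ and $\int_0^\infty P_n\cdot(\deg\le n-1)\,w\,dx=0$ by orthogonality \eqref{m-or}. The surviving endpoint term is $-\omega_k w_0(t_k)P_n^2(t_k)=-h_nR_{n,k}$, whence $\partial_{t_k}\ln h_n=-R_{n,k}$.

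Next, for \eqref{Dpr} I would differentiate the orthogonality relation $\int_0^\infty P_nP_{n-1}\,w\,dx=0$. Using monicity to write $\partial_{t_k}P_n=\bigl(\partial_{t_k}p(n,\vec t\,)\bigr)P_{n-1}+(\deg\le n-2)$ and $\partial_{t_k}P_{n-1}=(\deg\le n-2)$, orthogonality \eqref{m-or} annihilates the $\int P_n(\partial_{t_k}P_{n-1})w$ cross term and collapses $\int(\partial_{t_k}P_n)P_{n-1}w$ into $\bigl(\partial_{t_k}p(n,\vec t\,)\bigr)h_{n-1}$; the endpoint term from $\partial_{t_k}w$ is $-\omega_k w_0(t_k)P_n(t_k)P_{n-1}(t_k)=-h_{n-1}r_{n,k}$. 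Setting the sum to zero gives $\partial_{t_k}p(n,\vec t\,)=r_{n,k}$. Finally \eqref{DbtR} and \eqref{Dalr} follow by applying $\partial_{t_k}$ to $\ln\beta_n=\ln h_n-\ln h_{n-1}$ and to $\al_n=p(n,\vec t\,)-p(n+1,\vec t\,)$ and substituting \eqref{DhR} and \eqref{Dpr}.

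The main obstacle here is analytic rather than algebraic: one must justify differentiating across the jump, i.e. make rigorous the passage $\partial_{t_k}\theta(x-t_k)=-\delta(x-t_k)$. I would handle this by the endpoint-splitting described above, so that the delta contribution appears as an honest Leibniz boundary term and no distributional manipulation is needed; the only points to verify carefully are the interchange of $\partial_{t_k}$ with the (improper) integral near $x=0$ and $x=\infty$, which is routine given $\al>-1$ and the Gaussian-type decay of $w_0$, and the degree-drop of $\partial_{t_k}P_n$, which is immediate from monicity.
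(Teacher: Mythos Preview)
Your proposal is correct and follows essentially the same route as the paper: differentiate $h_n=\int P_n^2 w\,dx$ and the relation $\int P_n P_{n-1} w\,dx=0$, kill the interior terms by orthogonality together with the degree drop of $\partial_{t_k}P_n$ coming from monicity, and identify the contribution from $\partial_{t_k}w$ with $-R_{n,k}h_n$ and $-r_{n,k}h_{n-1}$. The only cosmetic difference is that the paper writes $\partial_{t_k}\theta(x-t_k)=-\delta(x-t_k)$ and evaluates via the Dirac delta, whereas you obtain the same boundary term by splitting the integral at $t_k$ and applying the Leibniz rule; this is the same computation in different clothing. (Minor slip: the decay of $w_0(x)=x^{\alpha}\e^{-x}$ is exponential, not Gaussian, but that is more than enough for the interchange you invoke.)
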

\begin{proof}
Differentiating both sides of
\[h_n(\vec{t})=\int_0^{\infty}P_n^2(x;\vec{t})w(x;\vec{t})dx\]
over $t_k$, we get
\begin{align}\label{m-Dhn1}
\partial_{t_k}h_n(\vec{t})=\int_0^{\infty}2P_n(x;\vec{t})\cdot\left(\partial_{t_k}P_n(x;\vec{t})\right)w(x;\vec{t}))dx+\int_0^{\infty}P_n^2(x;\vec{t})\cdot\left(\partial_{t_k}w(x;\vec{t})\right)dx.
\end{align}
Applying $\partial_{t_k}$ to
\[P_n\left(x;\vec{t}\,\right)=x^n+p(n,\vec{t}\,)x^{n-1}+\cdots+P_n(0;\vec{t}\,),\]
we have
\begin{align*}
\partial_{t_k}P_n(x;\vec{t})=&\partial_{t_k}p(n,\vec{t}\,)\cdot x^{n-1}+\{\text{lower~degree~of~}x\}\\
=&\partial_{t_k}p(n,\vec{t}\,)\cdot P_{n-1}+\{\text{linear~combination~of~}P_{i}, i=n-2,n-3,\cdots,1,0\}.
\end{align*}
This together with \eqref{m-or} indicates that the first integral in \eqref{m-Dhn1} is zero and hence
\begin{align*}
\partial_{t_k}h_n(\vec{t})=&\int_0^{\infty}P_n^2(x;\vec{t})\cdot\left(\partial_{t_k}w(x;\vec{t})\right)dx\\
=&- w_k\int_0^{\infty}P_n^2(x;\vec{t})x^{\al}\e^{-x}\delta(x-t_k)dx\\
=&-R_{n,k}(\vec{t})\cdot h_n(\vec{t}),
\end{align*}
which gives us \eqref{DhR}.

Taking the derivative of \[0=\int_0^{\infty} P_n(x;\vec{t})P_{n-1}(x;\vec{t})dx\]
with respect to $t_k$, via an argument similar to the above, we can derive \eqref{Dpr} .
\end{proof}

By using the differential relations \eqref{DbtR} and \eqref{Dalr}, and with the aid of identities and expressions involving $\al_n$ and $\bt_n$ which are presented in the previous subsection, we establish Toda equations for $\al_n$ and $\bt_n$.
\begin{proposition} The recurrence coefficients $\alpha_n$ and $\beta_n$ satisfy the following Toda equations
\begin{subequations}\label{Toda}
\begin{equation}\label{Todabt}
\delta\ln\beta_n=\al_{n-1}-\al_n+2,
\end{equation}
\begin{equation}\label{Todal}
\left(\delta-1\right)\al_n=\beta_n-\beta_{n+1},
\end{equation}
where $\delta=\sum\limits_{k=1}^m t_k\partial_{t_k}$.
\end{subequations}
\end{proposition}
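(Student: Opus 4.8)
The plan is to derive both Toda equations directly from the two differential relations \eqref{DbtR} and \eqref{Dalr} already established, converting them into statements purely about $\al_n$ and $\bt_n$ by feeding in the algebraic identities of the preceding subsection. Since $\delta=\sum_{k=1}^m t_k\partial_{t_k}$, the strategy in each case is the same: apply $\delta$ to the relevant quantity, use the appropriate $t_k$-derivative formula to turn the result into a $t_k$-weighted sum of auxiliary quantities, and then recognize that weighted sum via the explicit expressions for the recurrence coefficients.

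For \eqref{Todabt}, I would apply $\delta$ to $\ln\bt_n$ and invoke \eqref{DbtR} to obtain $\delta\ln\bt_n=\sum_{k=1}^m t_k\left(R_{n-1,k}-R_{n,k}\right)$. The key observation is that \eqref{m-alR} expresses exactly such a weighted sum, $\sum_{k=1}^m t_k R_{n,k}=\al_n-2n-1-\al$. Substituting this at levels $n$ and $n-1$ and subtracting collapses the $2n$ and $\al$ contributions, leaving $\al_{n-1}-\al_n+2$, which is the claim. The only care needed is tracking the index shift $n\mapsto n-1$ in the constant term $2n+1+\al$.

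For \eqref{Todal}, I would apply $\delta$ to $\al_n$ and use \eqref{Dalr} to get $\delta\al_n=\sum_{k=1}^m t_k\left(r_{n,k}-r_{n+1,k}\right)$. Here the crucial input is the summed compatibility identity \eqref{aldiff}, namely $\al_n+\sum_{k=1}^m t_k\left(r_{n+1,k}-r_{n,k}\right)=\bt_{n+1}-\bt_n$, which rearranges to $\sum_{k=1}^m t_k\left(r_{n,k}-r_{n+1,k}\right)=\al_n+\bt_n-\bt_{n+1}$. Combining the two gives $\delta\al_n=\al_n+\bt_n-\bt_{n+1}$, that is, $(\delta-1)\al_n=\bt_n-\bt_{n+1}$.

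I do not anticipate a genuine obstacle: both identities reduce to routine substitution once \eqref{DbtR}, \eqref{Dalr}, \eqref{m-alR}, and \eqref{aldiff} are in place. The only subtle point is bookkeeping, namely ensuring the index shifts and the signs in \eqref{aldiff} and in the constant term of \eqref{m-alR} are handled consistently, rather than any conceptual difficulty. In particular, \eqref{aldiff}, obtained earlier by summing \eqref{S22} over $k$ and adding \eqref{S21}, does the real work for the second equation, packaging the compatibility conditions into a clean relation between the $t_k$-weighted first difference of $r_{n,k}$ and the second difference of $\bt_n$.
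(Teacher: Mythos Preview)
Your proposal is correct and follows essentially the same route as the paper: both equations are obtained by applying $\delta$ via \eqref{DbtR} and \eqref{Dalr}, then eliminating the resulting $t_k$-weighted sums of $R_{n,k}$ and $r_{n,k}$ using \eqref{m-alR} and \eqref{aldiff}, respectively. The paper's phrasing for \eqref{Todal} substitutes $-\partial_{t_k}\al_n$ directly into \eqref{aldiff} rather than first computing $\delta\al_n$, but this is only a cosmetic reordering of the same steps.
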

\begin{proof}
Multiplying both sides of \eqref{DbtR} by $t_k$ yields
\[
t_k\,\partial_{t_k}\ln \beta_n=t_k\left(R_{n-1,k}-R_{n,k}\right).\]
Summing this equation over $k$ from $1$ to $m$, we have
\begin{align*}
\delta\ln\beta_n=&\sum_{k=1}^m t_k R_{n-1,k}-\sum_{k=1}^mt_kR_{n,k}.
\end{align*}
Using \eqref{m-alR} to remove the two summation terms, we get \eqref{Todabt}.
Due to \eqref{Dalr}, we replace $r_{n+1,k}-r_{n,k}$ by $-\partial_{t_k}\al_n$ in \eqref{aldiff} and obtain \eqref{Todal}.
\end{proof}

\subsection{Proof of Theorem \ref{LO-result}}

{\bf (a) Derivation of the Riccati equations \eqref{m-Ric} and the coupled PDEs \eqref{m-RPDE}}\\
We observe the following facts:
\begin{align}
\partial_{t_k} R_{n,j}=&\partial_{t_j} R_{n,k},\label{m-DklR}\\
\partial_{t_k} r_{n,j}=&\partial_{t_j} r_{n,k},\label{m-Dklr}
\end{align}
for $j,k=1,\cdots,m$. The first equality can be seen by combining the fact that $\partial_{t_{j}t_k}^2\ln h_n=\partial_{t_k t_{j}}^2\ln h_n$ with \eqref{DhR}, while the second one is a consequence of $\partial_{t_{j}t_k}^2p(n,\vec{t})=\partial_{t_k t_{j}}^2p(n,\vec{t})$ and \eqref{Dpr}.

Now we go ahead to derive the Riccati equations satisfied by $R_{n,k}$ and $r_{n,k}$.
Eliminating $r_{n+1,k}$ from \eqref{m-S12} and \eqref{Dalr}, we get
\begin{align*}
\partial_{t_k}\al_n=2r_{n,k}+(\al_n-t_k)R_{n,k}.
\end{align*}
Inserting \eqref{m-alR} into this equation, we find
\begin{align*}
\sum_{j\neq k} t_j\cdot\partial_{t_k} R_{n,j}+t_k\cdot\partial_{t_k}R_{n,k}=2r_{n,k}+\Bigg(\sum_{j=1}^m t_j R_{n,j}-t_k+2n+\al\Bigg)R_{n,k}.
\end{align*}
According to \eqref{m-DklR}, we replace $\partial_{t_k} R_{n,j}$ by $\partial_{t_j} R_{n,k}$ in the above equation, which leads to the first Riccati equation \eqref{m-Ric1}.
To derive \eqref{m-Ric2}, we make use of \eqref{m-S2'1} and \eqref{DbtR}, both of which are related to $R_{n-1,k}$ and $R_{n,k}$:
\begin{align*}
r_{n,k}^2=&\bt_nR_{n,k}R_{n-1,k},\\
\partial_{t_k}\bt_n=&\bt_nR_{n-1,k}-\bt_nR_{n,k}.
\end{align*}
Getting rid of $R_{n-1,k}$ from the second equality by using the first one, we get
\begin{align}\label{Dbtn}
\partial_{t_k}\bt_n=\frac{r_{n,k}^2}{R_{n,k}}-\bt_nR_{n,k}.
\end{align}
Applying $\partial_{t_k}$ to \eqref{m-btp}, in view of \eqref{Dpr} and \eqref{m-Dklr}, we find
\begin{align*}
\partial_{t_k}\bt_n=\delta r_{n,k}.
\end{align*}
Combining it with \eqref{Dbtn}, and replacing $\bt_n$ in the resulting equation by using \eqref{btRr}, we come to \eqref{m-Ric2}, which completes the derivation of Riccati equations.

Solving $r_{n,k}$ from \eqref{m-Ric1} and substituting it into \eqref{m-Ric2}, noting that
$\delta(t_k)=t_k$ and $\delta(t_jR_{n,j})=t_jR_{n,j}+t_j\cdot\delta R_{n,j}$, after simplification, we arrive at
the coupled PDEs satisfied by $R_{n,k}, k=1,2,\cdots, m$.
\medskip\\
{\bf (b) Derivation of PDE \eqref{m-sigmaPDE} satisfied by $\sigma_n(\vec{t})$}\\
Recall the Hankel determinant of our interest, i.e.
\begin{align*}
D_n\left(\,\vec{t}\,\right):=\det\left(\int_{0}^{\infty} x^{i+j}w(x;\vec{t}\,)dx\right)_{i,j=0}^{n-1}.
\end{align*}
It is well known that it can be represented as the product of $h_j$, the square of the $L^2$ norm of the monic orthogonal polynomial $P_j(x;\vec{t})$, that is,
\begin{align*}
D_n(\vec{t})=\prod_{j=0}^{n-1}h_j(\vec{t}).
\end{align*}
Refer to \cite[pp. 16--19]{Ismail}.
Thus, it follows from \eqref{DhR} that
\begin{align*}
\sigma_n(\vec{t})=\delta\,ln D_n(\vec{t})=&-\sum_{j=0}^{n-1}\sum_{k=1}^{m}t_kR_{j,k}(\vec{t}),
\end{align*}
where $\delta=\sum\limits_{k=1}^m t_k\partial_{t_k}$.
According to \eqref{m-btp} and \eqref{m-psumR}, we find
\begin{align}
\sigma_n=&p(n,\vec{t})+n(n+\al)\label{sigmap}\\
=&-\beta_n+\sum_{k=1}^m t_k r_{n,k}+n(n+\al).\label{sigmabtr}
\end{align}
Inserting \eqref{btRr} into \eqref{sigmabtr}, we obtain the expression for $\sigma_n(\vec{t})$ in terms of $R_{n,k}$ and $r_{n,k}$, namely \eqref{sigmarR}. If, in turn, we can express $R_{n,k}$ and $r_{n,k}$ by $\sigma_n$ or its derivatives, then we will readily establish the PDE satisfied by $\sigma_n$.
\begin{lemma}\label{rnRnsigma}
$r_{n,k}$ and $R_{n,k}$ are connected with $\sigma_n$ and its derivatives by
\begin{align}\label{rDsigma}
r_{n,k}=\partial_{t_k}\sigma_n,
\end{align}
and
\begin{subequations}\label{Rbtsigma}
\begin{align}\label{Rsigma}
R_{n,k}=\frac{1}{2\bt_n}\left(-\partial_{t_k}\bt_n+{\rm sgn}(\omega_k)\sqrt{(\partial_{t_k}\bt_n)^2+4\bt_nr_{n,k}^2}\right),
\end{align}
for $k=1,\cdots,m$, where ${\rm sgn}(\omega_k)$ is the sign function of $\omega_k$, which is $1$ for $\omega_k>0$, $-1$ for $\omega_k<0$ and $0$ for $\omega_k=0$.
Here
\begin{align}
\bt_n=&\delta\sigma_n-\sigma_n+n(n+\al),\label{btsigma}\\
\partial_{t_k}\bt_n=&\sum_{j=1}^m t_j\cdot\partial_{t_kt_j}^2\sigma_n.\label{D1btsigma}
\end{align}
\end{subequations}
\end{lemma}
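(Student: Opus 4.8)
The plan is to establish the four displayed relations in the order \eqref{rDsigma}, \eqref{btsigma}, \eqref{D1btsigma}, \eqref{Rsigma}, each building on its predecessors, drawing throughout on the expressions for $\sigma_n$ recorded in \eqref{sigmap}--\eqref{sigmabtr} together with the differential relations \eqref{Dpr} and \eqref{Dbtn}. For \eqref{rDsigma}, I would differentiate $\sigma_n=p(n,\vec{t})+n(n+\al)$ from \eqref{sigmap} in $t_k$; the constant term drops out, so $\partial_{t_k}\sigma_n=\partial_{t_k}p(n,\vec{t})$, and \eqref{Dpr} identifies this as $r_{n,k}$. For \eqref{btsigma}, applying $\delta=\sum_k t_k\partial_{t_k}$ and invoking \eqref{rDsigma} gives $\delta\sigma_n=\sum_k t_k r_{n,k}$; comparing with \eqref{sigmabtr} and solving for $\bt_n$ yields $\bt_n=\delta\sigma_n-\sigma_n+n(n+\al)$.

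For \eqref{D1btsigma}, I would differentiate \eqref{btsigma} in $t_k$. The only point needing attention is $\partial_{t_k}(\delta\sigma_n)=\partial_{t_k}\big(\sum_j t_j\partial_{t_j}\sigma_n\big)=\partial_{t_k}\sigma_n+\sum_j t_j\,\partial_{t_kt_j}^2\sigma_n$, where the extra $\partial_{t_k}\sigma_n$ arises from differentiating the factor $t_k$ in the $j=k$ summand. This term cancels exactly against the contribution of $-\sigma_n$, leaving $\partial_{t_k}\bt_n=\sum_j t_j\,\partial_{t_kt_j}^2\sigma_n$.

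The last relation \eqref{Rsigma} is the crux and the step I expect to be the main obstacle. Clearing the denominator in \eqref{Dbtn} turns it into the quadratic $\bt_n R_{n,k}^2+(\partial_{t_k}\bt_n)R_{n,k}-r_{n,k}^2=0$ in the unknown $R_{n,k}$, whose solutions are $\frac{1}{2\bt_n}\big(-\partial_{t_k}\bt_n\pm\sqrt{(\partial_{t_k}\bt_n)^2+4\bt_n r_{n,k}^2}\big)$. Selecting the correct branch is the delicate part, and I would settle it by a sign analysis. From the definition \eqref{eq:Rn} one has $R_{n,k}=\omega_k w_0(t_k)P_n^2(t_k)/h_n$ with $w_0(t_k)=t_k^{\al}\e^{-t_k}>0$, $P_n^2(t_k)\geq0$, and $h_n>0$ (the weight being nonnegative by the hypothesis on the partial sums of the $\omega_k$), so $R_{n,k}$ carries the sign of $\omega_k$. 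Since $\bt_n=h_n/h_{n-1}>0$ the radical is real and at least $|\partial_{t_k}\bt_n|$; tracking signs then shows the $+$ branch is nonnegative and the $-$ branch nonpositive, so the correct choice is $\mathrm{sgn}(\omega_k)$ times the radical, with the degenerate case $\omega_k=0$ forcing $R_{n,k}=r_{n,k}=0$ consistently. Substituting \eqref{btsigma}, \eqref{D1btsigma} and \eqref{rDsigma} recasts the formula entirely in terms of $\sigma_n$ and its derivatives, which completes the lemma.
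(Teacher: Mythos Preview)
Your proof is correct and follows essentially the same route as the paper: differentiate \eqref{sigmap} using \eqref{Dpr} for \eqref{rDsigma}, feed this into \eqref{sigmabtr} for \eqref{btsigma}, differentiate once more for \eqref{D1btsigma}, and for \eqref{Rsigma} recast \eqref{Dbtn} as a quadratic in $R_{n,k}$ and select the branch by a sign argument. The only minor difference is in that last step: the paper invokes \eqref{DbtR} to rewrite $R_{n,k}+\tfrac12\partial_{t_k}\ln\bt_n$ as $\tfrac12(R_{n-1,k}+R_{n,k})$ and reads off its sign, whereas you observe directly that the radical dominates $|\partial_{t_k}\bt_n|$, so the two branches are respectively nonnegative and nonpositive and must match $\mathrm{sgn}(\omega_k)=\mathrm{sgn}(R_{n,k})$; both arguments are valid and yours is slightly more self-contained.
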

\begin{proof}
Differentiating both sides of \eqref{sigmap} with respect to $t_k$, in light of \eqref{Dpr}, we get \eqref{rDsigma}. Plugging it into \eqref{sigmabtr} gives us \eqref{btsigma} which immediately leads to \eqref{D1btsigma}.

Note that \eqref{Dbtn} is equivalent to the following second order linear equation in $R_{n,k}$:
\begin{align*}
\bt_nR_{n,k}^2+\left(\partial_{t_k}\bt_n\right)R_{n,k}-r_{n,k}^2=0.
\end{align*}
Solving for $R_{n,k}$ from it, we get two possible solutions
\begin{align*}
R_{n,k}=\frac{-\partial_{t_k}\bt_n\pm\sqrt{(\partial_{t_k}\bt_n)^2+4\bt_nr_{n,k}^2}}{2\bt_n},
\end{align*}
or equivalently,
\begin{align}\label{Rsol}
R_{n,k}+\frac{1}{2}\partial_{t_k}\ln\bt_n=\frac{\pm\sqrt{(\partial_{t_k}\bt_n)^2+4\bt_nr_{n,k}^2}}{2\bt_n}.
\end{align}
To determine which solution to choose, it suffices to discuss the sign function of $R_{n,k}+\frac{1}{2}\partial_{t_k}\ln\bt_n$. According to \eqref{DbtR}, we have
\begin{align}\label{Rsol2}
R_{n,k}+\frac{1}{2}\partial_{t_k}\ln\bt_n=\frac{1}{2}\left(R_{n-1,k}+R_{n,k}\right).
\end{align}
From the definition of $R_{j,k}$, i.e.
\begin{align*}
R_{j,k}=\frac{\omega_k}{h_j}P_j^2(t_k;\vec{t})t_k^{\al}\e^{-t_k}, \qquad\qquad j\geq0,
\end{align*}
and the fact that $h_j>0$, we observe that for $t_k>0$,
\begin{align*}
{\rm sgn}(R_{n-1,k})={\rm sgn}(R_{n,k})={\rm sgn}(\omega_k).
\end{align*}
Hence, it follows from \eqref{Rsol2} that
\[{\rm sgn}\left(R_{n,k}+\frac{1}{2}\partial_{t_k}\ln\bt_n\right)={\rm sgn}(\omega_k).\]
This combined with \eqref{Rsol} gives us \eqref{Rsigma}.
\end{proof}

Substituting \eqref{rDsigma} and \eqref{Rbtsigma} into \eqref{sigmarR}, after simplification, we arrive at \eqref{m-sigmaPDE}.

\section{Lax Pair and Coupled $P_V$ System} \label{sec:RHP}
We start with the RH problem for the monic polynomials $P_n(z;\vec{t})$ orthogonal with respect to the weight function \eqref{eq:weight}.
With the aid of Lax pair, we can express $\sigma_n(\vec{t})$ in terms of quantities satisfying a coupled $P_V$ system.

\subsection{RH problem for orthogonal polynomials }\label{sec:OP-RHP}
Let
\begin{equation}\label{def:Y}
Y(z;\vec{t})= \left (\begin{array}{cc}
P_n(z)& \frac 1{2\pi i}\int_0^{+\infty} \frac{P_n(x) w(x;\vec{t})}{x-z}dx\\
-2\pi i \gamma_{n-1}^2 \;P_{n-1}(z)& - \gamma_{n-1}^2\;
\int_0^{+\infty} \frac{P_{n-1}(x) w(x;\vec{t})}{x-z}dx\end{array} \right ),
\end{equation}
where $\gamma_{n-1}=h_{n-1}^{-1/2}$ with $h_{n-1}$ denoting the square of the $L^2$-norm of $P_{n-1}(z)$ with respect to $w(x;\vec{t})$ defined in \eqref{eq:weight}.

From \eqref{def:Y}, it is  seen that the function $Y(z)=Y(z;\vec{t})$ is the unique solution of the following RH problem (see Fokas, Its and Kitaev \cite{FIK}, and also Deift \cite{Deift}):

\begin{description}
  \item(a)~~  $Y(z)$ is analytic in
  $\mathbb{C}\backslash [0,+\infty)$;

  \item(b)~~  $Y(z)$  satisfies the jump condition
  $$Y_+(x)=Y_-(x) \left(
                               \begin{array}{cc}
                                 1 & w(x;\vec{t}) \\
                                 0 & 1 \\
                                 \end{array}
                             \right),
\qquad x\in (0,+\infty),$$
where $w(x;\vec{t})$ is defined in \eqref{eq:weight};

  \item(c)~~  As $z\to\infty$, we have
  \begin{equation*}\label{eq:Y-infinity}Y(z)=\left (I+\frac {Y_{1}}{z}+\frac {Y_{2}}{z^2}+O\left (\frac 1 {z^3}\right )\right )\left(
                               \begin{array}{cc}
                                 z^n & 0 \\
                                 0 & z^{-n} \\
                               \end{array}
                             \right),\quad \quad z\rightarrow
                             \infty ;
                             \end{equation*}
\item(d)~~  As $z\to 0$, we have

\begin{equation*}
 Y(z)=\left\{
  \begin{array}{ccc}
                                 \left(
                               \begin{array}{cc}
                                 O(1)& O(1) \\
                                 O(1) & O(1) \\
                                 \end{array}
                             \right), & \quad \alpha>0,\\
                                \left(
                               \begin{array}{cc}
                                 O(1)& O(\log z)\\
                                 O(1) &O(\log z) \\
                                 \end{array}
                             \right), & \quad \alpha=0,\\

                          \left(
                               \begin{array}{cc}
                                 O(1)& O(z^{\alpha}) \\
                                 O(1) & O(z^{\alpha}) \\
                                 \end{array}
                             \right), &\quad -1< \alpha<0;
                               \end{array}
 \right.
 \end{equation*}

%


\item(e)~~  As $z\to t_k$,  we have
\begin{equation*}
 Y(z)=\left(
                               \begin{array}{cc}
                                 O(1) & O( \log (z-t_k) ) \\
                                 O(1) & O( \log (z-t_k) )  \\
                                 \end{array}
                             \right),\end{equation*}
for $k=1,\cdots, m$.


\end{description}

\subsection{A model RH problem and Lax pair }\label{sec:model-RHP}

We define
\begin{equation}\label{def:Psi}
 \Psi(z; \vec{t})=t_1^{-(n+\frac{\alpha}{2}) \sigma_3}Y(t_1z;\vec{t})(t_1z)^{\frac{\alpha}{2} \sigma_3} e^{-\frac{t_1}{2}z\sigma_3}, \end{equation}
where $\arg z\in (-\pi,\pi)$ and $\sigma_3$ is the Pauli matrix
\[\sigma_3=\left(
                               \begin{array}{cc}
                              1&  0\\
                                 0 &  -1\\
                                 \end{array}
                             \right).\]
Then, it follows from the RH problem for $Y$ that  $\Psi(z)=\Psi(z;\vec{t})$ satisfies the following RH problem.

\begin{description}
  \item(a)  $\Psi(z)$ is analytic   in
  $\mathbb{C}\backslash [0,+\infty)$;

  \item(b) $\Psi(z)$  satisfies the jump condition
  \begin{equation}\label{eq:Psi-Jump}
\left\{  \begin{array}{ll}
\Psi_+(x)=\Psi_-(x)
                               \begin{pmatrix}
                              1&  \omega_0+\sum_{k=1}^m\omega_k\theta(x-a_k)\\
                                 0 &  1\\
                                 \end{pmatrix},
& \quad x\in (0,+\infty),\medskip\\
\Psi_+(x)=\Psi_-(x) e^{\pi i\alpha\sigma_3},& \quad x\in(-\infty,0),
\end{array}\right.
\end{equation}
where
\[a_k:=\frac{t_k}{t_1},\qquad k=1,\cdots, m;\]

  \item(c)~~  As $z\to\infty$, we have
  \begin{equation}\label{eq:Psi-infty} \Psi(z)=\left (I+\frac {\Psi_{1}}{z}+\frac {\Psi_{2}}{z^2}+O\left (\frac 1 {z^3}\right )\right )                                 z^{(n+\frac{\alpha}{2})\sigma_3} {\e}^{-\frac{t_1}{2}z\sigma_3};\end{equation}
\item(d)~~  As $z\to 0$, we have

  \begin{equation}\label{eq: Psi-origin-1}
 \Psi(z)=\Psi^{(0)}(z)z^{\frac{\alpha}{2} \sigma_3} \left(
                               \begin{array}{cc}
                                 1 & \frac{\omega_0}{1-{\e}^{2\pi i\alpha}}  \\
                                 0 & 1 \\
                                 \end{array}
                             \right) E_0, \quad \alpha \notin \mathbb{N}, \end{equation}
  and
    \begin{equation}\label{eq: Psi-origin-2}
 \Psi(z)=\Psi^{(0)}(z)z^{\frac{\alpha}{2} \sigma_3} \left(
                               \begin{array}{cc}
                                 1 & -\frac{\omega_0}{2\pi i} \log z  \\
                                 0 & 1 \\
                                 \end{array}
                             \right) E_0, \quad \alpha\in \mathbb{N}, \end{equation}

                             where  $\Psi^{(0)}(z)$ is analytic near the origin,  the branch cut for $z^{\alpha}$ and $\log z$ is taken along $(0, +\infty)$ such  that $\arg z\in(-\pi,\pi)$ and  $E_0$ is a piecewise constant matrix such that
                             $E_0=I$ for $\arg z\in(0,\pi)$ and $E_0=\left(
                               \begin{array}{cc}
                                 1 & -\omega_0 \\
                                 0 & 1 \\
                                 \end{array}
                             \right)$ for $\arg z\in(-\pi,0)$;

\item(e)~~  As $z\to a_k$, we have

\begin{equation}\label{eq: Psi-ak}
 \Psi(z)=\Psi^{(k)}(z)\left(
                               \begin{array}{cc}
                                 1 & -\frac{\omega_k}{2\pi i} \log (z-a_k)  \\
                                 0 & 1 \\
                                 \end{array}
                             \right) E_k,  \end{equation}

                             where $\Psi^{(k)}(z)$ is analytic near $a_k$,  the branch cut for $\log (z-a_k)$ is taken along $(a_k, +\infty)$ such  that $\arg (z-a_k)\in(-\pi,\pi)$, for  $k=1,\cdots, m$.  The piecewise constant matrix $E_k$ is defined by                              $E_k=I$ for $\arg z\in(0,\pi)$ and $E_k=\left(
                               \begin{array}{cc}
                                 1 & -\omega_k \\
                                 0 & 1 \\
                                 \end{array}
                             \right)$ for $\arg z\in(-\pi,0)$, for $k=1,\cdots, m$.

\end{description}

\begin{proposition} We have the following  Lax pair
\begin{equation}\label{eq: Lax pair}
\left\{
\begin{array}{l}
  \frac{d}{dz} \Psi(z;\vec{t})=L(z;\vec{t}) \Psi(z;\vec{t}),   \\
  \delta \Psi(z;\vec{t})=U(z;\vec{t}) \Psi(z;\vec{t}) ,\\
  \partial_{t_k}\Psi(z;\vec{t})=\frac{\hat{A}_k(t)}{t_k-t_1z} \Psi(z;\vec{t}), \quad k=2,\cdots, m,
\end{array}
\right.
  \end{equation}
  where $\delta=\sum\limits_{k=1}^m t_k\partial_{t_k}$ and
  \begin{equation}\label{eq: L}
 L(z;\vec{t})=-\frac{t_1}{2}\sigma_3+\frac{\hat{A}_0(\vec{t})}{z}+\sum_{k=1}^m \frac{\hat{A}_k(\vec{t})}{z-a_k},   \end{equation}
  \begin{equation}\label{eq: U}
 U(z;\vec{t})=-\frac{t_1}{2}z\sigma_3+\hat{B}(\vec{t}),
\end{equation}
with $a_k=t_k/t_1, k=1,\cdots, m$. Here the coefficients  are given below
   \begin{equation}\label{eq:A0}
 \hat{A}_0= \begin{pmatrix}
                    a &  -\left( a-\frac{\alpha}{2}\right)y\\
                 \left(a+\frac{\alpha}{2}\right)/y &-a
                  \end{pmatrix},  \end{equation}
  \begin{equation}\label{eq:Ak}
 \qquad\qquad \hat{A}_k= \begin{pmatrix}
                   - u_kv_k &  u_ky\\
                -u_kv_k^2/y &  u_kv_k               \end{pmatrix},  \quad k=1,\cdots, m, \end{equation}
                and
  \begin{equation}\label{eq:B}
\hat{B}= \begin{pmatrix}
                  0 &  b_1y\\
           b_2/y & 0                  \end{pmatrix},   \end{equation}
with $b_1(\vec{t})$ and $b_2(\vec{t})$ defined by \eqref{eq:b-k}, and
 \begin{equation*}
 a(\vec{t}):= \sum\limits_{k=1}^mu_k(\vec{t})v_k(\vec{t})+n+\frac{\alpha}{2}.
\end{equation*}
 \end{proposition}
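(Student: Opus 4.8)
The plan is to read off the three coefficient matrices directly from the logarithmic derivatives of $\Psi$, exploiting that the jumps in \eqref{eq:Psi-Jump} are piecewise constant in $z$ and, crucially, that the normalized jump points $a_k=t_k/t_1$ are invariant under $\delta=\sum_k t_k\partial_{t_k}$. First I would verify that $\det\Psi\equiv1$ (every jump matrix in \eqref{eq:Psi-Jump} has unit determinant and the scalar$^{\sigma_3}$ prefactors in \eqref{def:Psi} cancel at infinity), so that $\Psi^{-1}$ exists and $L:=(\partial_z\Psi)\Psi^{-1}$, $U:=(\delta\Psi)\Psi^{-1}$, and $V_k:=(\partial_{t_k}\Psi)\Psi^{-1}$ ($k\geq2$) are well defined. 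Since the jump matrices are $z$-independent, writing $\Psi_+=\Psi_- J$ gives $(\partial_z\Psi_+)\Psi_+^{-1}=(\partial_z\Psi_-)\Psi_-^{-1}$, so $L$ has no jump across $(0,+\infty)\cup(-\infty,0)$ and extends to a single-valued function on $\mathbb{C}$ with isolated singularities only at $0,a_1,\dots,a_m$.

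Next I would pin down the singular structure of $L$. From the expansion \eqref{eq:Psi-infty} one computes $L=-\frac{t_1}{2}\sigma_3+O(1/z)$ as $z\to\infty$, which fixes the constant term in \eqref{eq: L}. The local forms \eqref{eq: Psi-origin-1}--\eqref{eq: Psi-origin-2} at the origin show $L$ has at worst a simple pole at $0$ whose residue $\hat{A}_0$ is conjugate to $\frac{\alpha}{2}\sigma_3$, hence traceless with $\det\hat{A}_0=-\alpha^2/4$; writing a general such matrix yields the form \eqref{eq:A0}. The logarithmic behaviour \eqref{eq: Psi-ak} at each $a_k$ forces a simple pole with a rank-one nilpotent residue (trace $0$, determinant $0$), and the general parametrization of such a matrix sharing a common off-diagonal gauge $y$ is exactly \eqref{eq:Ak}; this is what defines $u_k$ and $v_k$. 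Liouville's theorem then gives $L$ as the sum of these local parts, i.e. \eqref{eq: L}.

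For the deformation equations I would use $\delta a_k=\delta(t_k/t_1)=0$, so $\delta$ moves none of the jump points; consequently $U=(\delta\Psi)\Psi^{-1}$ has no pole at $0$ or at any $a_k$ (the $z$-dependent factors $z^{\alpha\sigma_3/2}$, $\log z$, $\log(z-a_k)$ are annihilated by $\delta$), while $\delta e^{-\frac{t_1}{2}z\sigma_3}$ contributes the single growing term $-\frac{t_1}{2}z\sigma_3$ at infinity because $\delta t_1=t_1$. Hence $U$ is a first-degree matrix polynomial $-\frac{t_1}{2}z\sigma_3+\hat{B}$, and matching the $O(1)$ term of $(\delta\Psi)\Psi^{-1}$ against \eqref{eq:Psi-infty} produces the off-diagonal $\hat{B}$ of \eqref{eq:B}, whose entries I would identify with $b_1,b_2$ in \eqref{eq:b-k}. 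For $k\geq2$, $\partial_{t_k}$ leaves $t_1$ (hence the exponential at infinity) untouched but moves the single jump point $a_k$, so $V_k$ decays at infinity and carries exactly one simple pole at $z=a_k$; its residue must coincide with the $\hat{A}_k$ appearing in $L$ by consistency at that pole, giving the third line of \eqref{eq: Lax pair} with denominator $t_k-t_1z=t_1(a_k-z)$.

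Finally, the diagonal part of the $O(1/z)$ matching at infinity forces $\hat{A}_0+\sum_k\hat{A}_k$ to have diagonal $(n+\frac{\alpha}{2})\sigma_3$, which is precisely the recorded relation $a=\sum_k u_kv_k+n+\frac{\alpha}{2}$. I expect the main obstacle to be the bookkeeping that ties all residues together: showing that one and the same gauge function $y$ may be used in $\hat{A}_0$, in every $\hat{A}_k$, and in $\hat{B}$, and that the off-diagonal entries then collapse to exactly $-(a-\frac{\alpha}{2})y$, $u_ky$, $b_1y$ and their $1/y$ counterparts. This demands careful tracking of the subleading coefficients $\Psi_1,\Psi_2$ in \eqref{eq:Psi-infty} together with the analytic prefactors $\Psi^{(0)},\Psi^{(k)}$ in the local expansions, and verifying that these parametrizations are mutually compatible; the resulting zero-curvature conditions of \eqref{eq: Lax pair} then serve as the consistency check that the coupled $P_V$ system of Theorem \ref{cpv} governs $u_k,v_k$.
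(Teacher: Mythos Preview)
Your proposal is correct and follows essentially the same route as the paper: both arguments observe that the jump matrices in \eqref{eq:Psi-Jump} are piecewise constant in $z$ (and that $\delta a_k=0$), conclude that $L$, $U$ and $(\partial_{t_k}\Psi)\Psi^{-1}$ are meromorphic with the indicated poles, and then use $\det\Psi=1$ together with the expansions \eqref{eq:Psi-infty}--\eqref{eq: Psi-ak} to fix the trace, determinant and rank of each residue, arriving at the parametrizations \eqref{eq:A0}--\eqref{eq:B}. Your remark that the shared gauge $y$ in $\hat{A}_0$, $\hat{A}_k$ and $\hat{B}$ is the delicate bookkeeping step is exactly where the paper invokes \eqref{eq:Ak-Psi} and \eqref{eq:B-Psi}: both off-diagonals are read from the \emph{same} coefficient $\Psi_1$, which is what forces a common $y$.
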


It is seen from \eqref{eq:Psi-Jump} that $\Psi$, $\frac{d}{dz}\Psi$, $\partial_{t_k}\Psi$ and $\delta\Psi$ share the same jump \eqref{eq:Psi-Jump} on the real axis.
 Thus, $L=(\frac{d}{dz}\Psi)\Psi^{-1}$, $U=(\delta\Psi)\Psi^{-1}$  and $(\partial_{t_k}\Psi)\Psi^{-1}$ are meromorphic for $z$ in the complex plane with only possible isolate singularities  at $z=0, a_k$.
 Furthermore, using the asymptotic behaviors of $\Psi(z)$  given in \eqref{eq:Psi-infty}--\eqref{eq: Psi-ak},  we see that $U$ and $L$ are of the form appearing in \eqref{eq: L} and \eqref{eq: U}.

 Next, we determine the coefficients in \eqref{eq: L} and \eqref{eq: U}. It follows from the fact $\det \Psi=1$ that $\tr L=\tr U=0$. Hence all the coefficients $\hat{A}_k$, $k=0,\cdots, m$, and $\hat{B}$ are trace-zero.
 Substituting \eqref{eq:Psi-infty} into the first equation of the system \eqref{eq: Lax pair}, we find after comparing the
 coefficient of $\frac{1}{z}$ on both sides of the equation that
\begin{equation}\label{eq:Ak-Psi}
\sum_{k=0}^m\hat{A}_k=\begin{pmatrix}
               n+\frac{\alpha}{2} & t_1 (\Psi_1)_{12}\\
              -t_1(\Psi_1)_{21} & -(n+\frac{\alpha}{2} )              \end{pmatrix},
              \end{equation}
              where $\Psi_1$ is the coefficient in  the large $z$ expansion of $\Psi(z)$ in \eqref{eq:Psi-infty}.
Similarly, inserting the behavior of $\Psi $ at infinity, i.e. \eqref{eq:Psi-infty}, into the second equation of the system \eqref{eq: Lax pair}, we find
\begin{equation}\label{eq:B-Psi}
\hat{B}=\begin{pmatrix}
                0&  t_1(\Psi_1)_{12}\\
             -t_1(\Psi_1)_{21} & 0     \end{pmatrix}.
              \end{equation}
Using the behaviors  of  $\Psi(z)$ as $z\to 0$ and $z\to a_k$, given in \eqref{eq: Psi-origin-1}--\eqref{eq: Psi-ak}, we have
 \begin{equation}\label{eq: AkTrace}
  \det \hat{A}_0=-\alpha^2/4 \quad \mbox {and} \quad  \det \hat{A}_k=0, \quad k=1,\cdots, m.
 \end{equation}
 Hence,  the equations \eqref{eq:Ak-Psi} and \eqref{eq: AkTrace} imply that
  the coefficients $\hat{A}_k$, $ k=1,\cdots, m$, can be parameterized in the form appearing in \eqref{eq:A0} and \eqref{eq:Ak}.
The equation \eqref{eq:B} then follows from  \eqref{eq:A0}, \eqref{eq:Ak}, \eqref{eq:Ak-Psi} and \eqref{eq:B-Psi}.

\begin{remark}
 When $m=2$, the Lax pair \eqref{eq: Lax pair} is equivalent to the Garnier system  \cite[(3.6)]{KNS} appeared in the studies of the degeneration scheme of 4-dimensional Painlev\'e-type equations.
\end{remark}

\subsection{Proof of Theorem \ref{cpv}}
The compatibility condition $\delta\frac{d}{dz}\Psi=\frac{d}{dz}\delta\Psi$ gives the zero-curve equation
   \begin{equation*}
 \delta L(z)-  \frac{d}{dz} U(z)+[L(z),U(z)]=0,
 \end{equation*}
 which is equivalent to
 \begin{equation}\label{eq:zero-curve-2}
 \left\{ \begin{array}{ll}
                  \delta \hat{A}_0=[\hat{B}, \hat{A}_0], \\
                       \delta \hat{A}_k=[\hat{B}-\frac{t_k}{2}\sigma_3, \hat{A}_k], \quad k=1,\cdots, m.
                    \end{array}
\right.
 \end{equation}
 Substituting \eqref{eq:A0}--\eqref{eq:B} into \eqref{eq:zero-curve-2}, we arrive at
  \begin{equation}\label{eq:zero-curve-3}
 \left\{ \begin{array}{ll}
                 \delta ((a-\frac{\alpha}{2})y)=2ab_1y,\\
                     \delta((a+\frac{\alpha}{2})/y) =2ab_2/y,\\
                       \delta (u_kv_k)=b_1u_kv_k^2+b_2u_k, \quad k=1,\cdots, m,\\
                        \delta (u_ky)=-t_ku_k y+2b_1u_k v_k y, \quad k=1,\cdots, m.
                                           \end{array}
\right. \end{equation}
We obtain from the first two equations that
 \begin{equation}\label{eq:dy}
 \delta \ln y=b_1-b_2=\sum_{k=1}^mu_k(v_k-1)^2-2n-\alpha.
 \end{equation}
The system of nonlinear differential equations  \eqref{m-eq:cpv} then follows from \eqref{eq:dy}
 and the last $2m$ equations of \eqref{eq:zero-curve-3}.

Substituting  into the first equation of the Lax pair the expansion \eqref{eq:Psi-infty},  we obtain after comparing the coefficients of $1/z^2$ in the expansion
\begin{equation}\label{eq:Lexpand2}
-\frac{t_1}{2}[\sigma_3\Psi_1, \Psi_1]-\frac{t_1}{2}[\Psi_2, \sigma_3]+\left(n+\frac{\alpha}{2}\right)[\Psi_1, \sigma_3]-\Psi_1=\sum_{k=1}^ma_k\hat{A}_k.
 \end{equation}
This, together with \eqref{eq:B-Psi} and \eqref{eq:Ak}, implies
\begin{equation}\label{eq:LExpandEntry}
t_1(\Psi_1)_{11}=-t_1^2(\Psi_1)_{12}(\Psi_1)_{21}-\sum_{k=1}^mt_k(\hat{A}_k)_{11}=b_1b_2+\sum_{k=1}^m t_ku_kv_k.
 \end{equation}
From \eqref{def:Y} and \eqref{def:Psi}, we find that
\[p(n,\vec{t})=t_1\left(\Psi_1\right)_{11},\]
which combined with \eqref{sigmap} and \eqref{eq:LExpandEntry} gives us
 \begin{align}
\sigma_n=&t_1\left(\Psi_1\right)_{11}+n(n+\alpha)\label{sgpsi}\\
=&b_1b_2+\sum_{k=1}^m t_ku_kv_k+n(n+\al).\nonumber
\end{align}
Replacing  $b_1$ and $b_2$ by  \eqref{eq:b-k}, in view of \eqref{eq:H-v} , we come to \eqref{eq:H}.
It follows directly from   \eqref{eq:H} that  \eqref{eq:H-equation} are equivalent to   \eqref{m-eq:cpv}.

\subsection{Proof of Theorem \ref{LO-RHP} by Lax pair}
We aim to prove \eqref{Rruv}--\eqref{m-DhR}. Substituting in the Lax pair \eqref{eq: Lax pair} the expansion   \eqref{eq: Psi-ak}, we obtain
 \begin{equation}\label{eq: AkPsi}
 \begin{aligned}
\hat{A}_k=&-\frac{\omega_k}{2\pi i} \Psi^{(k)}(a_k)  \left(
                               \begin{array}{cc}
                                 0 & 1  \\
                                 0 & 0 \\
                                 \end{array}
                             \right)  \Psi^{(k)}(a_k)^{-1}\\
&= -\frac{\omega_k}{2\pi i}\left(
                               \begin{array}{cc}
                                 -\Psi^{(k)}_{11}(a_k)\Psi^{(k)}_{21}(a_k)   &  \Psi^{(k)}_{11}(a_k)^2   \\
                                 - \Psi^{(k)}_{21}(a_k)^2  & \Psi^{(k)}_{11}(a_k)\Psi^{(k)}_{21}(a_k) \\
                                 \end{array}
                             \right),
\end{aligned}
\end{equation}
where $ \Psi^{(k)}(z)$ appears in  \eqref{eq: Psi-ak} for $k=1,\cdots, m$. Using \eqref{def:Y}, \eqref{def:Psi}, \eqref{eq:B-Psi} and  \eqref{eq: AkPsi}, we obtain
\eqref{Rruv}, \eqref{btb1b2} and \eqref{hnt1b1y}. \eqref{uvRr} are direct consequence of \eqref{Rruv}.
Combining \eqref{btb1b2} with \eqref{b1b2} which results from the definitions of $b_1$ and $b_2$, we get \eqref{btrR}.

Now we go ahead with the derivation of \eqref{m-DhR}. Substituting the expansion \eqref{eq:Psi-infty} into the last equation of \eqref{eq: Lax pair} yields
\begin{equation*}
 \partial_{t_k}(t_1\Psi_1)=-\hat{A}_k, \qquad k=2,\cdots, m,
 \end{equation*}
 where $\Psi_1$ is the coefficient in the expansion \eqref{eq:Psi-infty}. Recalling \eqref{eq:Ak}, \eqref{eq:B-Psi} and \eqref{sgpsi}, we arrive at a system of differential equations with respect to $t_k$
 \begin{equation}\label{eq:Dsk}
 \left\{ \begin{array}{ll}
                 \partial_{t_k} \ln (b_1y)=-u_k/b_1,\\
                   \partial_{t_k}\ln(b_1b_2)=u_kv_k^2/b_1-u_k/b_1,    \\
                    \partial_{t_k} \sigma_n=u_kv_k,                                 \end{array}
\right. \end{equation}
for $k=2,\cdots, m$. To study the case $k=1$, we substitute the expansion \eqref{eq:Psi-infty} into the identity
 $$z\frac{d}{dz}\Psi(z)-\delta\Psi(z)=(zL-U)\Psi(z)$$
which comes from the first two equations of \eqref{eq: Lax pair}, and obtain after comparing the coefficients of $1/z$ on both sides of the resulting equality
\begin{equation*}
\delta(t_1\Psi_1)=\left(n+\frac{\alpha}{2}\right)[t_1\Psi_1, \sigma_3]-\sum_{k=1}^mt_k\hat{A}_k.
 \end{equation*}
Then, it is seen  from the above equation
\begin{equation}\label{eq:dH}
 \delta \sigma_n=\delta (t_1(\Psi_1)_{11})=\sum_{k=1}^mt_ku_kv_k,
 \end{equation}
\begin{equation}\label{eq:dby}
\delta(b_1y)=\delta (t_1(\Psi_1)_{12})=-(2n+\alpha)b_1y-\sum_{k=1}^mt_ku_ky,
 \end{equation}
and
\begin{equation}\label{eq:db2y}
\delta(b_2/y)=-\delta (t_1(\Psi_1)_{21})=(2n+\alpha)b_2/y+\sum_{k=1}^mt_ku_k^2v_k/y.
 \end{equation}
Combining \eqref{eq:Dsk} with \eqref{eq:dH}--\eqref{eq:db2y}, we obtain
\begin{equation}\label{eq:Ds1}
 \left\{ \begin{array}{ll}
                 \partial_{t_1}\ln (t_1^{2n+\alpha}b_1y)=-u_1/b_1,\\
                   \partial_{t_1}\ln (b_1b_2)=u_1v_1^2/b_1-u_1/b_1,    \\
                    \partial_{t_1} \sigma_n=u_1v_1.                              \end{array}
\right. \end{equation}
A combination of \eqref{eq:Dsk} and \eqref{eq:Ds1} yields
\begin{equation}\label{eq:Dtk}
 \left\{ \begin{array}{ll}
                 \partial_{t_k} \ln (t_1^{2n+\alpha}b_1y)=-u_k/b_1,\\
                   \partial_{t_k}\ln(b_1b_2)=u_kv_k^2/b_1-u_k/b_1,    \\
                    \partial_{t_k} \sigma_n=u_kv_k,                                 \end{array}
\right. \end{equation}
for $k=1,\cdots, m$, and the first equation of \eqref{eq:Dtk} together with \eqref{hnt1b1y} gives us \eqref{m-DhR}.

It is seen from \eqref{def:Y} that
\begin{equation}\label{eq:alphaY}\alpha_n=(Y_1)_{11}+\frac{(Y_2)_{12}}{(Y_1)_{12}},\end{equation}
where $Y_1$ and $Y_2$ are the coefficients in the expansion of $Y(z)$ as $z\to\infty$; see Deift \cite{Deift}.
Recalling the relation \eqref{def:Psi}, we have
\begin{equation}\label{eq:alphaPsi}\alpha_n=t_1(\Psi_1)_{11}+t_1\frac{(\Psi_2)_{12}}{(\Psi_1)_{12}},\end{equation}
where $\Psi_1$ and $\Psi_2$ are the coefficients in the expansion \eqref{eq:Psi-infty}.
Now, the relation \eqref{alRu} follows directly from \eqref{eq:alphaPsi} and the $(12)$-entry of \eqref{eq:Lexpand2}.

\begin{remark}
With the relations \eqref{Rruv}, \eqref{btb1b2} and the definition \eqref{eq:b-k}, we can show that the ladder operators given by Lemma \ref{lad-oper} are indeed the first column of the Lax pair relating $z$, i.e. the first equation of \eqref{eq: Lax pair}. Note that, to derive the raising operator, we also make use of \eqref{m-S2'1} which can be deduced by using the definitions of $R_{n,k}$ and $r_{n,k}$ together with the fact that $\bt_n=h_n/h_{n-1}$.

In addition to the Lax pair, by using the asymptotic behavior of $\Psi(z)$ at $\infty$ given by \eqref{eq:Psi-infty}, we can establish the following difference equation
\begin{align*}
 \Psi(z;t,n+1)=& \left(
                               \begin{array}{cc}
                                 z+(\Psi_1)_{11}(n+1)-(\Psi_1)_{11}(n) &\qquad -(\Psi_1)_{12}(n)  \\
                                 \qquad(\Psi_1)_{21}(n+1) & 0 \\
                                 \end{array}
                             \right)   \Psi(z;t,n),
\end{align*}
which is equivalent to the three-term recurrence relation \eqref{OP-recu} satisfied by monic orthogonal polynomials.
Moreover, the  compatibility condition of the above difference equation and the first equation of \eqref{eq: Lax pair} lead to the identities $(S_1)$ and $(S_2)$ given in Lemma \ref{s1s2}.
\end{remark}

\begin{remark}
According to  \eqref{ruv}, \eqref{btb1b2} and \eqref{hnt1b1y}, the differential equations \eqref{eq:Dtk} are equivalent to
  \begin{equation}\label{eq:Dbeta}
 \left\{ \begin{array}{ll}
                 \partial_{t_k} \ln h_n=-R_{n,k}, \\
                     \partial_{t_k}  \beta_n=\frac{r_{n,k}^2}{R_{n,k}}-\beta_nR_{n,k},  \\
                       \partial_{t_k} \sigma_n=r_{n,k}, \\                              \end{array}
\right. \end{equation}
for $k=1,\cdots,m$.
It is noted that these equations play an important role in deriving the differential equation satisfied by $\sigma_n$, i.e. \eqref{m-sigmaPDE}.

Furthermore, with the identities obtained by the Lax pair, we are able to derive the Riccati equations deduced by using the ladder operator approach. Indeed, in view of \eqref{Rruv}, we find from \eqref{eq:zero-curve-3} that
  \begin{equation*}
\delta r_{n,k}=\frac{r_{n,k}^2}{R_{n,k}}+( b_1b_2)R_{n,k}. \end{equation*}
Using \eqref{b1b2} to replace $b_1b_2$, we get the Riccati equation \eqref{m-Ric2}.
 Since  $R_{n,k}=u_k/b_1$, we have
 $$ \delta(R_{n,k})=\frac{1}{b_1y} \delta(u_ky)-\frac{u_ky}{(b_1y)^2} \delta(b_1y).$$
Replacing  the derivatives of  $u_ky$ and $b_1y$ by  the last equation of \eqref{eq:zero-curve-3} and \eqref{eq:dby} respectively,
we come to the Riccati equation \eqref{m-Ric1}.
\end{remark}

\section{Double Scaling Analysis at the Hard Edge}\label{sec:SL}

\subsection{Proof of Proposition \ref{scaledRr}}

Since $r_{n ,k}$ and $R_{n,k}$ are connected with $\sigma_n$ by the relations given in Lemma \ref{rnRnsigma}, we make use of them to derive what we want.

Changing variables $t_k=s_k/(4n)$ and $\vec{t}=\vec{s}/(4n)$ in \eqref{rDsigma} yields
\begin{align}\label{rnsg}
r_{n,k}\left(\frac{\vec{s}}{4n}\right)=4n\cdot\partial_{s_k}\sigma_n.
\end{align}
Dividing its both sides by $4n$ and taking the limit $n\rightarrow\infty$, in view of the definition of $\sigma(\vec{s})$, we get \eqref{m-limRnrnsigma-1}.

Replacing $r_{n,k}(\vec{t})$ by using \eqref{rnsg} and $t_k$ by $s_k/(4n)$ in \eqref{Rbtsigma}, we have
\begin{align}\label{m-Rsigma-1}
R_{n,k}\left(\frac{\vec{s}}{4n}\right)=\frac{1}{2\bt_n}\left(-\partial_{t_k}\bt_n+{\rm sgn}(\omega_k)\sqrt{(\partial_{t_k}\bt_n)^2+64n^2\bt_n\left(\partial_{s_k}\sigma_n\right)^2}\right),
\end{align}
where
\begin{align*}
\bt_n=&\delta\sigma_n-\sigma_n+n(n+\al),\\
\partial_{t_k}\bt_n=&4n\sum_{j=1}^m s_j\cdot\partial_{s_ks_j}^2\sigma_n,
\end{align*}
with $\delta=\sum\limits_{k=1}^m s_k\partial_{s_k}$.
Taking the series expansion in large $n$ of the right hand side of \eqref{m-Rsigma-1}, we find that the leading order term is given by $4\, {\rm sgn}(\omega_k)\sqrt{\left(\partial_{s_k}\sigma_n\right)^2}$, which indicates that
\[R_{n,k}\left(\frac{\vec{s}}{4n}\right)=O(1),\]
as $n\to\infty$.
With this fact, on replacing $r_{n,k}(\vec{t})$ by using \eqref{rnsg} and $t_k$ by $s_k/(4n)$ in \eqref{m-Ric1},
we are led to \eqref{m-limRnrnsigma-2}.

\subsection{Proof of Theorem \ref{m-resultscaled} and Corollary \ref{tau-Rq}}
To derive the expression for $\sigma(\vec{s})$ in terms of $R_{k}(s)$ and $\delta R_k(\vec{s})$, the PDE(s) satisfied by $R_k(\vec{s})$ and $\sigma(\vec{s})$, we make use of the corresponding finite-$n$ results and the relations given by Proposition \ref{scaledRr}.

Using \eqref{m-Ric1} to get rid of $r_{n,k}$ in \eqref{sigmarR}, we come to an expression for $\sigma_n$ in terms of $R_{n,k}$ and $\delta R_{n,k}$. Replacing $R_{n,k}, \delta R_{n,k}$ and $t_k$ by $R_k, \delta R_k$ and $s_k/(4n)$ respectively in the expression, and sending $n$ to $\infty$ on both sides, we obtain \eqref{m-sgRk-lim}.
Substituting $R_k=-4\partial_{s_k}\sigma$ which comes from Proposition \eqref{scaledRr} into \eqref{m-sgRk-lim}, we get \eqref{m-PDEsigma}, the PDE satisfied by $\sigma$.
Replacing $R_{n,k}$ by $R_k$ and $t_k$ by $s_k/(4n)$ in \eqref{m-RPDE}, and taking the series expansion of both sides for large $n$, by comparing the leading coefficients in $n$, we come to \eqref{m-RPDE-lim}, which completes the proof of Theorem \ref{m-resultscaled}.

For $\vec{t}=(t_1,\cdots,t_m)=\frac{\tau}{4n} (\nu_1, \cdots, \nu_m)$, we have
  \begin{equation*}\label{eq:dtD}
\tau\frac{d}{d\tau} \ln D_n(\vec{t})=\delta \ln D_n(\vec{t})=\sigma_n(\vec{t}).
\end{equation*}
Taking derivative once again yields
 \begin{align*}
\left(\tau\frac{d}{d\tau}\right)^2 \ln D_n(\vec{t})=&\delta \sigma_n(\vec{t})=\sum_{k=1}^m t_kr_{n,k}(\vec{t}),
\end{align*}
where the second equality results from \eqref{eq:dH} and \eqref{ruv}. Taking the limit of both sides as $n\rightarrow\infty$, we find
 \begin{align}
\left(\tau\frac{d}{d\tau}\right)^2 \ln\lim\limits_{n\rightarrow\infty} D_n(\vec{t})=&\frac{\tau}{4}\cdot\sum\limits_{k=1}^m\left(\nu_k\lim_{n\to\infty}\frac{r_{n,k}\left(\vec{t}\right)}{n}\right)\nonumber\\
=&-\frac{\tau }{4}\sum_{k=1}^m\nu_k\hat{R}_k(\tau), \label{eq:LimitdH}
\end{align}
 where the second identity comes from \eqref{m-limRnrnsigma-2} and \eqref{defRhat}. We readily see that \eqref{eq:LimitD_n} is a solution of \eqref{eq:LimitdH}. This completes the proof of Corollary \ref{tau-Rq}.


\section*{Acknowledgments}
Shulin Lyu was supported by National Natural Science Foundation of China under grant number 12101343 and by Shandong Provincial Natural Science Foundation with project number ZR2021QA061. Yang Chen was supported by the Macau Science and Technology Development Fund under grant numbers FDCT 023/2017/A1 and FDCT 0079/2020/A2, and by the University of Macau under grant number MYRG 2018-00125-FST.
Shuai-Xia Xu was partially supported by
National Natural Science Foundation of China under grant numbers 11971492 and 11571376.

\end{document}